\documentclass[12pt, leqno]{article}


\usepackage{amsthm} \usepackage{amsmath} \usepackage{amsfonts}
\usepackage{amssymb} \usepackage{latexsym} \usepackage{enumerate}
\usepackage[dvips]{color}
 \usepackage{mathrsfs}
\usepackage[numbers]{natbib}


\theoremstyle{plain}%
\newtheorem{Theorem}{Theorem}[section] %
\newtheorem{Lemma}[Theorem]{Lemma}
\newtheorem{Proposition}[Theorem]{Proposition} %

\theoremstyle{definition}%
\newtheorem{Assumption}[Theorem]{Assumption}%

\theoremstyle{remark}%
\newtheorem{Remark}[Theorem]{Remark} %


\newcommand{\set}{\triangleq}

 %



\newcommand{\envspace}{\vspace{2mm}}


\renewcommand{\mathcal}{\mathscr}

\renewcommand{\epsilon}{\varepsilon}

\numberwithin{equation}{section}

\begin{document}

\title{Optimal
  investment with intermediate consumption and random endowment}
\author{Oleksii Mostovyi\\
  University of Texas at Austin,\\
  Department of  Mathematics,\\
  2515 Speedway Stop C1200,\\
  Austin, Texas 78712-1202\\
  (mostovyi@math.utexas.edu)}
\date{\today}

\date{}

\maketitle
\begin{abstract}
We consider a problem of optimal
investment with intermediate consumption and random endowment in
an incomplete semimartingale model of a financial market. We establish the key assertions of the utility
maximization theory assuming that both primal and dual value
functions are finite in the interiors of their domains as well as
that random endowment at maturity can be dominated by the terminal
value of a self-financing wealth process. In order to facilitate
verification of these conditions, we present alternative, but
equivalent conditions, under which the conclusions of the theory
hold.
 \end{abstract}

\section{Introduction}
The problem of utility maximization in incomplete markets is
of central importance in mathematical finance. The theory was
developed by He and Pearson \cite{HePearson1, HePearson2},
Karatzas, Lehoczky,  Shreve, and Xu \cite{KLSX}, Karatzas and Shreve
\cite{KSmmf}, Kramkov and Schachermayer \cite{KS, KS2003},
Karatzas and \v{Z}itkovi\'c \cite{Karatzas-Zitkovic-2003}, and
\v{Z}itkovi\'c \cite{Zitkovic}.

In this paper we consider a problem for an agent, who in addition to
the initial wealth receives random endowment. The goal of such an
agent is to consume and invest in a way that maximizes his expected
utility. In complete market settings this problem is considered by
Karatzas and Shreve \cite{KSmmf}, see Chapter 4. Using replication
argument, the authors were able to reduce the problem to one without
endowment. Since in incomplete markets such replication might not be
possible, alternative techniques were used. For example, Cuoco
\cite{Cuoco1997} used martingale techniques to reformulate the
dynamic optimization problem as an equivalent static one. In
Markovian settings a possible approach is to use a
Hamilton-Jacobi-Bellman equation for the value function, see Duffie
and Zariphopoulou \cite{DuffieZarip1993} and Duffie, Fleming, Soner,
and Zariphopoulou \cite{DuffieFlemingSonerZarip1997}. Cvitani\'c,
Schachermayer, and Wang \cite{CvitanicSchachermayerWang} considered
the problem of optimal investment from terminal wealth under the
presence of random endowment in an incomplete semimartingale market.
Using the space $\left(\mathbf L^{\infty}\right)^{*}$ of finitely
additive measures as the domain of the dual problem they were able
to characterize the value function and the optimal terminal wealth
in terms of the solution to the dual problem.

In contrast to \cite{CvitanicSchachermayerWang}, Hugonnier and
Kramkov \cite{HugonnierKramkov2004} treated not only the initial
capital  as the variable of the value function, but also the
number of shares of random endowment. Although this lead to higher
dimensionality of the problem, such an approach allowed to relax some technical
assumptions. Stability of this utility maximization problem was
investigated by Kardaras and \v{Z}itkovi\'{c}
\cite{KardarasZitkovic2011}. Karatzas and \v{Z}itkovi\'{c}
\cite{Karatzas-Zitkovic-2003} as well as \v{Z}itkovi\'{c}
\cite{Zitkovic} extended the results of Cvitani\'{c}, Schachermayer,
and Wang \cite{CvitanicSchachermayerWang} to include
intermediate consumption.
Mostovyi \cite{Mostovyi2011} considered the problem of optimal
investment with intermediate consumption under the condition that
both primal and dual value functions are finite in their domains. It
is shown in \cite{Mostovyi2011} that such conditions are both
necessary and sufficient for the validity of the ``key'' conclusions of
the theory.

Present work extends the results of Mostovyi \cite{Mostovyi2011} to
incorporate random endowment process into the model. As in
\cite{HugonnierKramkov2004}, we treat the number of shares of random
endowment as the variable of the value function. This
approach allows us to obtain the standard conclusions of the utility
maximization theory under the following conditions:
(as in \cite{Mostovyi2011}) we assume that  both primal and dual functions are
finite in the interiors of
their domains and (as in \cite{HugonnierKramkov2004}) that the random endowment at maturity can be dominated by
the terminal value of a self-financing nonnegative wealth process. In order to facilitate verification of the former
condition, we present an alternative, but equivalent criterion in
terms of the finiteness of the value functions without random
endowment. The condition on the endowment can also be formulated in
several equivalent ways, which we specify as well.

In addition to the usual conclusions of the utility maximization theory, it turns out to be possible to establish some properties of the
primal and dual value functions on the boundaries of their domains,
namely to show upper semi-continuity of the primal value function
and lower semi-continuity of dual value function.

The remainder of this paper is organized as follows. In Section
\ref{mainResults} we describe the mathematical model and state  our
main results, whose proofs are presented in Section \ref{proofOfMainThm}.

\section{Main Results}\label{mainResults}
We consider a model of a financial market with finite time horizon
$[0,T]$ and zero interest rate. The price process $S=(S^i)_{i=1}^d$
of the stocks is assumed to be a 
semimartingale on a complete stochastic basis
$\left( \Omega, \mathcal F, \left(\mathcal F_t \right)_{t\in[0,T]}, \mathbb
P\right)$, where $\mathcal F_0$ is the completion of the trivial $\sigma$-algebra. We assume that
there are non-traded contingent claims with a payment
process $F = (F^i)_{i=1}^N$. If $q=(q_i)_{i=1}^N$ is the number of
such claims, then the cumulative payoff of this portfolio is given by
\begin{equation}\nonumber
qF \set \sum\limits_{i=1}^N q_iF^i = \left(\sum\limits_{i=1}^N
q_iF^i_t\right)_{t\in[0,T]}.
\end{equation}
Thus, the random variable $qF_t$ stands for the cumulative amount of
endowment received by a holder of $q$ such claims during the time
interval $[0,t].$ Both processes $S$ and $F$ are given exogenously.

As in \cite{Mostovyi2011}, we define a \textit{stochastic clock} as a
nondecreasing, c\'adl\'ag, adapted process such that
\begin{equation}
  \label{stochasticClock}
\kappa_0 = 0, ~~ \mathbb P\left[\kappa_T>0 \right]>0,\text{ and } \kappa_T\leq A
\end{equation}
for some finite constant $A$.  

Define a {portfolio} $\Pi$ as a quadruple $(x, q, H, c),$ where the
constant $x$ is the initial value of the portfolio, vector $q$ gives
the number of shares of illiquid contingent claims, $H =
(H_i)_{i=1}^d$ is a predictable $S$-integrable process that
specifies the amount of each stock in the portfolio, and
$c=(c_t)_{t\in[0,T]}$ is the consumption rate, which we assume to be
 optional and nonnegative.

The \textit{wealth process} $V=(V_t)_{t\in[0,T]}$ of such a
portfolio is defined as
\begin{equation}\nonumber
V_t = x + \int_0^t H_sdS_s - \int_0^tc_sd\kappa_s +
qF_t,\quad t\in[0,T].
\end{equation}
A portfolio $\Pi$ with $c\equiv 0$ and $q = 0$ is called
\textit{self-financing}. The collection of nonnegative wealth
processes of self-financing portfolios with initial value $x\geq 0$ is
denoted by $\mathcal X(x)$, i.e.
\begin{displaymath}
\mathcal X(x) \set \left\{X\geq 0:~X_t=x + \int_0^t
H_sdS_s,~~t\in[0,T] \right\},~~x\geq 0.
\end{displaymath}
A probability measure $\mathbb Q$ is an \textit{equivalent local
martingale measure} if $\mathbb Q$ is equivalent to $\mathbb P$ and
every $X\in\mathcal X(1)$ is a local martingale under $\mathbb Q$. We
denote the family of equivalent local martingale measures by
$\mathcal M$ and assume that
\begin{equation}\label{MisNotEmpty}
\mathcal M \neq \emptyset.
\end{equation}
This condition is essentially equivalent to the absence of arbitrage
opportunities 
on the market
, see
Delbaen and Schachermayer \cite{DS, Delbaen-Schachermayer1998} as well as Karatzas and Kardaras
\cite{Karatzas-Kardaras} for the exact statements and further
references.

To rule out doubling strategies in the presence of random endowment
we need to impose additional restrictions. Following Delbaen and
Schachermayer \cite{Delbaen-Schachermayer1997}, we say that a
nonnegative process in $\mathcal X(x)$ is \textit{maximal} if its
terminal value cannot be dominated by that of any other process in
$\mathcal X(x)$.

As in \cite{Delbaen-Schachermayer1997}, we define an
\textit{acceptable} process to be a process of the form $X = X' -
X'',$ where $X'$ is a nonnegative wealth process of a self-financing portfolio and $X''$ is maximal. Following
Hugonnier and Kramkov \cite{HugonnierKramkov2004}, we denote by
$\mathcal X(x,q)$ the set of acceptable processes with initial value
$x$ whose terminal value dominates the random payoff $-qF_T$:
\begin{equation}\label{defOfXcal}
\mathcal X(x, q) \set
\left\{X:~X{\rm~is~acceptable,~}X_0=x{\rm~and~}X_T+qF_T\geq
0\right\}.
\end{equation}
The set $\mathcal X(x,q)$ may be empty for some $(x,q)\in\mathbb
R^{N+1}$. We are interested in the values of $x$ and $q$,
for which $\mathcal X(x,q)\neq \emptyset$, and define
\begin{equation}\label{defOfK}
\mathcal K \set {\rm int}\left\{ (x,q)\in\mathbb R^{N+1}:~\mathcal
X(x,q)\neq \emptyset \right\}.
\end{equation}
It is proved Lemma 6 in Hugonnier and Kramkov
\cite{HugonnierKramkov2004} that
\begin{equation}\nonumber
{\rm cl}\mathcal K = \left\{ (x,q)\in\mathbb R^{N+1}:~\mathcal
X(x,q)\neq \emptyset \right\},
\end{equation}
where ${\rm cl}\mathcal K$ denotes the closure of the set $\mathcal
K$ in $\mathbb R^{N+1}$.

 We
restrict our attention to the wealth processes with nonnegative
\textit{terminal} values. Thus for each $(x, q)\in{\rm cl}\mathcal
K$ we set 
\begin{equation}\label{defOfA}
\begin{array}{c}
\mathcal A(x, q) \set \left\{ c=(c_t)_{t\in[0, T]} :~c~{\rm
is~nonnegative, ~optional,}\right.\\
\hspace{5mm}\left.{\rm and~there~exists~}X\in\mathcal
X(x,q){\rm~s.t.~}X_T-\int_0^Tc_td\kappa_t+qF_T\geq 0 \right\}.
\end{array}
\end{equation}
Hereafter we shall impose the following conditions on the the endowment process.
\begin{Assumption}\label{assumptionOnEndowment}
$(F^i_T)_{i=1,\dots ,N}$ are $\mathcal F_T$-measurable functions. There exists a maximal
nonnegative wealth process $X'$ of a
self-financing portfolio, such that
\begin{equation}\label{conditonOnEndowment}
X'_T\geq \sum\limits_{i=1}^N |F^i_T|.
\end{equation}
\end{Assumption}
\begin{Remark}\label{remarkOnTheEndowmnet}
Since in the definition (\ref{defOfA})  the endowment process $F$
enters only via its terminal value, it is natural to impose a
regularity condition on $F_T$ (and not on the whole $F$) as in the
Assumption \ref{assumptionOnEndowment}. 
Under the assumptions of the Theorem \ref{mainTheorem} below, 
(\ref{conditonOnEndowment}) is equivalent
to the condition on random endowment in Hugonnier and Kramkov
\cite{HugonnierKramkov2004} (see equation (5)). As a result in order
to check that (\ref{conditonOnEndowment}) holds, one can use
alternative equivalent conditions presented in the statement of Lemma 1 in
\cite{HugonnierKramkov2004}, which in particular asserts that
$(x,0)\in\mathcal K$ for every $x>0$.
\end{Remark}

The preferences of an economic agent are modeled with a
\textit{utility stochastic field}
$U=U(t, \omega, x):[0,T]\times\Omega\times[0,\infty)\to\mathbb R\cup \{-\infty\}$.
As in \cite{Mostovyi2011}, we assume that $U$
satisfies the conditions below.
\begin{Assumption}
  \label{assumptionOnU}
  For every $(t, \omega)\in[0, T]\times\Omega$ the function $x\to
  U(t, \omega, x)$ is strictly concave, increasing, continuously
  differentiable on $(0,\infty)$ and satisfies the Inada conditions:
  \begin{equation}\label{Inada}
    \lim\limits_{x\downarrow 0}U'(t, \omega, x) =
    +\infty \quad \text{and} \quad \lim\limits_{x\to
      \infty}U'(t, \omega, x) = 0,
  \end{equation}
  where $U'$ denotes the partial derivative with respect to the third argument.
 At $x=0$ we have, by continuity, $U(t, \omega, 0) = \lim\limits_{x \downarrow 0}U(t, \omega, x)$, this value
 may be $-\infty$.
For every
  $x\geq 0$ the stochastic process $U\left( \cdot, \cdot, x \right)$ is
  optional.
\end{Assumption}

The agent can control investment and consumption. His goal is to do
this in a way that maximizes expected utility. The value function $u$ is
defined as:
\begin{equation}\label{primalProblem}
u(x, q) \set \sup\limits_{c\in\mathcal A(x, q)}\mathbb
E\left[\int_0^T U(t, \omega,c_t)d\kappa_t\right], \quad
(x, q)\in{\rm cl}\mathcal K.
\end{equation}
We use the convention
\begin{displaymath}
  \mathbb{E}\left[
    \int_0^TU(t,\omega, c_t)d\kappa_t \right] \set -\infty
  \quad \text{if} \quad \mathbb{E}\left[ \int_0^TU^{-}(t,\omega, c_t)d\kappa_t
  \right]= +\infty.
\end{displaymath}
Here and below, $W^{-}$ and $W^{+}$ denote the negative and the positive parts of a
stochastic field $W$, respectively.
 Also, we set $u(x,q) \set -\infty$ for $(x,q)\in\mathbb ({\rm
cl}\mathcal K)^c.$

We are primarily interested in the following
questions.
\begin{enumerate}[(i)]
\item Under what conditions on the market model and on the utility
stochastic field $U$ does the maximizer to the problem
(\ref{primalProblem}) exist for every $(x,q)\in\{u>-\infty\}$?
\item
What are the properties of the function $u?$
\item What is the corresponding dual problem?
\end{enumerate}
 We employ the duality techniques to answer these questions and define a convex conjugate stochastic field
\begin{equation}\nonumber
V(t,\omega,y) \set \sup\limits_{x>0}\left(
U(t,\omega,x)-xy\right),\quad (t,\omega,y)\in[0,T]\times\Omega\times[0,\infty).
\end{equation}
Observe that $-V$ satisfies Assumption \ref{assumptionOnU}.
In order to construct the feasible set of the dual problem we define the set
$\mathcal L$ as the relative interior of the  polar cone $-\mathcal
K$:
\begin{equation}\label{defOfL}
\mathcal L \set {\rm ri}\left\{(y,r)\in\mathbb R^{N+1}: xy + qr \geq
0{\rm~for~all~}(x,q)\in\mathcal K\right\}.
\end{equation}
It is proved that $\mathcal L$ is an open set in
$\mathbb R^{N+1}$ if and only if for every $q\neq 0$ the random
variable $qF_T$ is non-replicable, see Lemma 7 in
\cite{HugonnierKramkov2004} for the exact statement.

By ${\mathcal Z}$ we denote the set of c\'adl\'ag densities of
equivalent local martingale measures:
\begin{equation}\nonumber
{\mathcal Z} \set \left\{\left(\frac{d\mathbb Q_t}{d\mathbb P_t}\right)_{t\in[0,T]}:\quad \mathbb Q\in\mathcal M\right\},
\end{equation}
and for each $y \geq 0$ we define
\begin{equation}\label{oldY}
\begin{array}{c}
 \hspace{-15mm}\mathcal Y(y) \set {\rm cl}\left\{ Y: ~Y{\rm~is~c\acute{a}dl\acute ag~adapted~and~}\right. \\
 \hspace{35mm}\left.0\leq Y\leq yZ ~\left(d\kappa\times\mathbb
P\right){\rm
~a.e.~for~some~}Z\in{\mathcal Z}\right\},\\
\end{array}
\end{equation}
where the closure is taken in the topology of convergence in measure
$\left( d\kappa\times\mathbb P\right)$ on the space of optional
processes. Now we are ready to set the
domain of the dual problem:
\begin{equation}\label{defOfY}
\begin{array}{c}
\hspace{-30mm}
\mathcal Y(y, r) \set \left\{ Y\in\mathcal Y(y): \mathbb E\left[
\int_0^T c_tY_td\kappa_t\right]\leq xy + qr~~\right.\\
\hspace{17mm}
\left.{\rm for
~every~}(x,q)\in{\rm cl}\mathcal K ~{\rm and} ~c\in\mathcal A(x,q)\right\},~~(y,r)\in{\rm cl}\mathcal L,\\
\end{array}
\end{equation}
and to state the dual optimization problem itself:
\begin{equation}\label{dualProblem}
v(y, r) \set \inf\limits_{Y\in\mathcal Y(y, r)}\mathbb
E\left[\int_0^T V(t, \omega,Y_t)d\kappa_t\right],~~~~(y, r)\in{\rm
cl}\mathcal L,
\end{equation}
where we use the convention:
\begin{displaymath}
  \mathbb{E}\left[ \int_0^TV(t, \omega, Y_t
    )d\kappa_t \right] \set +\infty
  \quad \text{if} \quad \mathbb{E}\left[ \int_0^TV^{+}(t, \omega, Y_t )d\kappa_t
  \right] = +\infty.
\end{displaymath}
Also, we set $v(y,r) \set +\infty$ for all $(y,r)\in({\rm
cl}\mathcal L)^c.$

 The following
theorem constitutes the main result of this work.
\begin{Theorem}\label{mainTheorem}
Assume that (\ref{stochasticClock}), (\ref{MisNotEmpty}), Assumptions
\ref{assumptionOnEndowment} and \ref{assumptionOnU} hold as well as
\begin{equation}\label{cond1}
u(x, q) > -\infty~{for~every~}(x,q)\in\mathcal K{~and~} v(y, r)
<\infty~{for~every~}(y,r)\in\mathcal L.
\end{equation}
Then we have:
\begin{enumerate}[(i)]
\item
  The functions $u$ and $v$ are finite on $\mathcal K$
  and  $\mathcal L$, respectively, $u$ and $v$ satisfy biconjugacy relations:
\begin{equation}\label{conjugacy}
\begin{array}{rclc}
u(x,q) &=& \inf\limits_{(y,r)\in{\rm cl}\mathcal L}\left( v(y, r) +
xy + qr\right), &(x, q)\in{\rm cl}\mathcal K;
\\
v(y, r) &=& \sup\limits_{(x, q)\in{\rm cl}\mathcal K}\left(u(x, q) -
xy - qr\right), &(y, r)\in{\rm cl}\mathcal L.
\end{array}
\end{equation}

\item
The function $u$ is upper semi-continuous, $u(x,q)<\infty$ for every
$(x,q)\in{\rm cl} \mathcal K$. For every $(x,q)\in\{ u>-\infty\}$ there
exists a unique maximizer to the problem (\ref{primalProblem}).

 The function $v$ is lower
semi-continuous, $v(y,r)>-\infty$ for every $(y,r)\in{\rm cl} \mathcal L$.
For every $(y,r)\in\{v<\infty\}$ there exists a unique minimizer  to
the problem (\ref{dualProblem}).

\item
For every
$(x,q)\in \mathcal K$,
the subdifferential of $u$ at
$(x,q)$ is non-empty,  $(y, r)\in\partial u(x, q)$ if and
only if the following conditions hold:
\begin{equation}\label{17}
\hat Y_t(y,r) = U'\left(t, \omega, \hat
c_t(x,q)\right),~~~~t\in[0,T],
\end{equation}
where $\hat Y(y,r)$ and $\hat c(x,q)$ are optimizers to problems
(\ref{dualProblem}) and (\ref{primalProblem}), respectively;
\begin{equation}\label{18}
\mathbb E\left[ \int_0^T\hat Y_t(y,r)\hat c_t(x,q)d\kappa_t\right] =
xy + qr;
\end{equation}
\begin{equation}\label{7-082}
\left|v(y,r)\right|<\infty.
\end{equation}

\end{enumerate}

\end{Theorem}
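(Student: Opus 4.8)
The plan is to run the abstract convex-duality machinery of Kramkov and Schachermayer \cite{KS, KS2003}, merging its intermediate-consumption version \cite{Mostovyi2011, Zitkovic, Karatzas-Zitkovic-2003} with its random-endowment version \cite{HugonnierKramkov2004}. The first and principal step is a bipolar relationship between the primal and dual feasible sets, viewed inside the optional processes under $d\kappa\times\P$: for $(x,q)\in\mathrm{cl}\,\mathcal{K}$, a nonnegative optional process $c$ belongs to $\mathcal{A}(x,q)$ if and only if $\E\left[\int_0^T c_t Y_t\,d\kappa_t\right]\le xy+qr$ for every $(y,r)\in\mathrm{cl}\,\mathcal{L}$ and every $Y\in\mathcal{Y}(y,r)$, and symmetrically $Y\in\mathcal{Y}(y)$ lies in $\mathcal{Y}(y,r)$ if and only if it satisfies the budget inequality against all $c\in\mathcal{A}(x,q)$, $(x,q)\in\mathrm{cl}\,\mathcal{K}$. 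One direction in each case is immediate from the definitions (\ref{defOfY}) and (\ref{oldY}); the converse carries the content, and this is exactly where Assumption \ref{assumptionOnEndowment} is used: domination of $\sum_i|F^i_T|$ by a maximal self-financing wealth process is what makes the acceptable-process and optional-decomposition apparatus of \cite{Delbaen-Schachermayer1997, HugonnierKramkov2004} applicable, so that the terminal-wealth bipolarity of Hugonnier and Kramkov can be lifted to the present model with a consumption clock by the arguments of \cite{Mostovyi2011, Zitkovic}. I expect this step --- in particular closedness of $\mathcal{Y}(y,r)$ under convergence in measure, the verification that the endowment-domination constraint does not shrink the effective dual domain, and the simultaneous presence of consumption, endowment and a vector parameter --- to be the main obstacle.

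Granting the bipolar relationship, the finiteness assertions follow by the Fenchel inequality. Because $\mathcal{K}$ is a nonempty open convex cone, the polar cone in (\ref{defOfL}) is a nonempty convex cone, so $\mathcal{L}\neq\emptyset$; fix $(y,r)\in\mathcal{L}$. For any $(x,q)\in\mathrm{cl}\,\mathcal{K}$, $c\in\mathcal{A}(x,q)$ and $Y\in\mathcal{Y}(y,r)$, integrating $U(t,\omega,c_t)\le V(t,\omega,Y_t)+c_tY_t$ against $d\kappa\times\P$ and using the budget inequality gives $\E\left[\int_0^T U(t,\omega,c_t)\,d\kappa_t\right]\le v(y,r)+xy+qr$, hence $u(x,q)\le v(y,r)+xy+qr<\infty$ by (\ref{cond1}); thus $u<\infty$ on $\mathrm{cl}\,\mathcal{K}$, and in particular on $\mathcal{K}$. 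The symmetric computation, starting from a point of $\mathcal{K}$ where $u>-\infty$, gives $v>-\infty$ on $\mathrm{cl}\,\mathcal{L}$. Together with the hypotheses, this proves the finiteness statements in (i) and (ii).

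Next I would run the abstract utility-maximization argument with the parameter $(x,q)$ ranging over the open convex cone $\mathcal{K}$ rather than a half-line; the scalar no-endowment case is \cite{Mostovyi2011}, and passing to finitely many dimensions does not affect the compactness steps. For $(y,r)\in\{v<\infty\}$ a dual minimizer $\hat Y(y,r)$ is produced by taking a minimizing sequence in $\mathcal{Y}(y,r)$, passing via a Koml\'os-type lemma for nonnegative functions to convex combinations converging $(d\kappa\times\P)$-a.e.\ to a limit that is finite a.e.\ (since $v(y,r)<\infty$) and that lies in $\mathcal{Y}(y,r)$ by convexity, solidity and closedness of the latter, and then invoking lower semicontinuity of $Y\mapsto\E\left[\int_0^T V(t,\omega,Y_t)\,d\kappa_t\right]$, which follows from Fatou together with a uniform-integrability bound supplied by the finiteness of $v$ near $(y,r)$; strict convexity of $V$ gives uniqueness. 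For $(x,q)\in\{u>-\infty\}$ a primal maximizer $\hat c(x,q)$ is produced in the same way, the bipolar characterization of $\mathcal{A}(x,q)$ ensuring feasibility of the limit and a de la Vall\'ee--Poussin type argument (again from finiteness of $v$) yielding uniform integrability of $U^{+}(\cdot,\cdot,c^{n}_{\cdot})$ along a maximizing sequence; strict concavity of $U$ gives uniqueness. A minimax argument over $\mathbb{R}^{N+1}$, combined with these existence statements, then yields the biconjugacy relations (\ref{conjugacy}). With (\ref{conjugacy}) in hand, the first identity exhibits $u$ on $\mathrm{cl}\,\mathcal{K}$ as a pointwise infimum of affine functions, so $u$ is upper semicontinuous and $u<\infty$ throughout $\mathrm{cl}\,\mathcal{K}$; dually the second identity exhibits $v$ as a pointwise supremum of affine functions, so $v$ is lower semicontinuous and $v>-\infty$ throughout $\mathrm{cl}\,\mathcal{L}$. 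This completes (ii).

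Finally, for (iii): $u$ is a finite concave function on the open set $\mathcal{K}$, so $\partial u(x,q)\neq\emptyset$ for every $(x,q)\in\mathcal{K}$, and $(y,r)\in\partial u(x,q)$ if and only if $(y,r)$ attains the infimum in the first identity of (\ref{conjugacy}), i.e.\ $u(x,q)=v(y,r)+xy+qr$. Retracing the Fenchel-inequality chain of the second paragraph with $c=\hat c(x,q)$, equality forces all the intermediate inequalities to be equalities: first $|v(y,r)|<\infty$, which is (\ref{7-082}) and guarantees $(y,r)\in\mathcal{L}$ and the existence of $\hat Y(y,r)$; second the budget identity $\E\left[\int_0^T \hat Y_t(y,r)\,\hat c_t(x,q)\,d\kappa_t\right]=xy+qr$, which is (\ref{18}); and third the pointwise equality $U(t,\omega,\hat c_t(x,q))=V(t,\omega,\hat Y_t(y,r))+\hat c_t(x,q)\,\hat Y_t(y,r)$ for $(d\kappa\times\P)$-a.e.\ $(t,\omega)$, which by differentiability of $U$ is precisely (\ref{17}). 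Conversely, if $(y,r)$ satisfies (\ref{17}), (\ref{18}) and (\ref{7-082}), reversing these steps gives $v(y,r)+xy+qr=\E\left[\int_0^T U(t,\omega,\hat c_t(x,q))\,d\kappa_t\right]=u(x,q)$, so $(y,r)$ attains the infimum in (\ref{conjugacy}) and hence $(y,r)\in\partial u(x,q)$.
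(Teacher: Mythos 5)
Your overall architecture (bipolarity of the feasible sets, the Fenchel inequality for finiteness, Koml\'os--Fatou--uniform-integrability for existence of optimizers, then conjugacy and the subdifferential characterization) is the same as the paper's, and parts (iii) and the finiteness/existence claims are argued essentially as in the actual proof. There is, however, one genuine logical gap in your ordering. You propose to obtain the biconjugacy relations (\ref{conjugacy}) first (``a minimax argument over $\mathbb{R}^{N+1}$, combined with these existence statements'') and then to \emph{deduce} upper semi-continuity of $u$ from the first identity in (\ref{conjugacy}), since it exhibits $u$ as an infimum of affine functions. But the first identity in (\ref{conjugacy}) says that $u$ coincides with its concave biconjugate on ${\rm cl}\mathcal{K}$, and that is true precisely when $-u$ is a proper \emph{closed} convex function, i.e.\ when $u$ is already known to be upper semi-continuous up to the boundary of ${\rm cl}\mathcal{K}$. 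A minimax/duality argument will give you the ``hard'' direction $v(y,r)=\sup_{(x,q)}\left(u(x,q)-xy-qr\right)$ (from which lower semi-continuity of $v$ does follow, as a supremum of affine functions), but the reverse identity cannot then be extracted without first establishing closedness of $-u$; as stated, your derivation of upper semi-continuity of $u$ is circular. The fix is exactly what the paper does in Lemma \ref{upperSemiContinuity}: run the same Koml\'os-type compactness argument you already use for existence of optimizers, but along a \emph{varying} sequence $(x^n,q^n)\to(x,q)$ rather than a constant one; this proves upper semi-continuity of $u$ (and lower semi-continuity of $v$) directly, before conjugacy, and then Rockafellar's biconjugation theorem applies.

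A second, smaller point: your ``minimax argument over $\mathbb{R}^{N+1}$'' is left unspecified, and this conceals the one place where the vector-valued parameter genuinely matters. The paper does not prove conjugacy by an $(N+1)$-dimensional minimax; for each fixed $(y,r)\in\mathcal{L}$ it slices the primal domain along the set $A(y,r)=\{(x,q)\in\mathcal{K}: xy+qr\le 1\}$, forms $\mathcal{C}(y,r)={\rm cl}\bigcup_{(x,q)\in A(y,r)}\mathcal{A}(x,q)$, checks via Lemma \ref{Lemma11} that taking the closure does not change the supremum of expected utility, and then applies the \emph{one-dimensional} abstract theorem of \cite{Mostovyi2011} to the pair $\left(\mathcal{C}(y,r),\mathcal{Y}(y,r)\right)$ to get $v(y,r)=\sup_{z>0}\left(\bar u(z)-z\right)=\sup_{(x,q)\in\mathcal{K}}\left(u(x,q)-xy-qr\right)$. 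Your claim that ``passing to finitely many dimensions does not affect the compactness steps'' is true, but the reduction to the scalar theorem (and in particular the closure-invariance of the value, Lemma \ref{Lemma11}) is a step that needs to be supplied, not assumed. Apart from these two items, and the minor slip that $|v(y,r)|<\infty$ places $(y,r)$ in ${\rm cl}\mathcal{L}$ rather than in $\mathcal{L}$, your proposal tracks the paper's proof.
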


Condition (\ref{cond1}) might be difficult to verify. The following lemma provides an equivalent
criterion in terms of the functions 
\begin{equation}\label{problemForW} w(x) \set u(x,0)=
\sup\limits_{c\in\mathcal A(x,0)}\mathbb E\left[
\int_0^TU\left(t,\omega,c_t \right)d\kappa_t
\right],\quad x>0,\end{equation}
and
\begin{equation}\label{problemForTildeW} \tilde w(y) \set
\inf\limits_{Y\in\mathcal Y(y)}\mathbb E\left[
\int_0^TV\left(t,\omega,Y_t \right)d\kappa_t
\right],\quad y>0.\end{equation}


\begin{Lemma}\label{Lemma2}
Let conditions (\ref{stochasticClock}) and (\ref{MisNotEmpty})  as well as Assumptions \ref{assumptionOnEndowment}
and
\ref{assumptionOnU} hold true. Then condition (\ref{cond1}) holds if
and only if 
\begin{equation}\label{cond2}
w(x) >-\infty{\rm~for~ every~} x>0{\rm~~ and~~} 
\tilde w(y)<\infty{\rm~ for~ every~} y>0.
\end{equation}
\end{Lemma}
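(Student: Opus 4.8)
The plan is to prove the two implications in \eqref{cond1}$\Leftrightarrow$\eqref{cond2} separately; the forward one is immediate, and in the reverse one the dual half is where the work lies. For \eqref{cond1}$\Rightarrow$\eqref{cond2}: since $(x,0)\in\mathcal K$ for every $x>0$ by Remark~\ref{remarkOnTheEndowmnet} (Lemma~1 of \cite{HugonnierKramkov2004} under Assumption~\ref{assumptionOnEndowment}), one gets $w(x)=u(x,0)>-\infty$ directly from \eqref{cond1}; and since $\mathcal L$ is a nonempty relatively open convex cone contained in $\{(y,r):y>0\}$, its projection onto the first coordinate is all of $(0,\infty)$, so for each $y>0$ there is $r$ with $(y,r)\in\mathcal L$, whence $\tilde w(y)\le v(y,r)<\infty$ because $\mathcal Y(y,r)\subseteq\mathcal Y(y)$ by its definition \eqref{defOfY}.

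For \eqref{cond2}$\Rightarrow$\eqref{cond1}, the primal part: fix $(x,q)\in\mathcal K$. As $\mathcal K$ is open, there is $\epsilon>0$ with $(x-\epsilon,q)\in\mathcal K$, so one may pick an acceptable $\bar X$ with $\bar X_0=x-\epsilon$ and $\bar X_T+qF_T\ge 0$. Since $w(\epsilon)>-\infty$, choose $c^\epsilon\in\mathcal A(\epsilon,0)$ with $\E[\int_0^TU(t,\omega,c^\epsilon_t)\,d\kappa_t]>-\infty$ and a witness $X^\epsilon\in\mathcal X(\epsilon,0)$, so $X^\epsilon$ is acceptable, $X^\epsilon_0=\epsilon$, $X^\epsilon_T\ge 0$ and $X^\epsilon_T\ge\int_0^Tc^\epsilon_t\,d\kappa_t$. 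Then $\bar X+X^\epsilon$ is acceptable (sums of acceptable processes being acceptable), has initial value $x$, and $(\bar X+X^\epsilon)_T+qF_T=(\bar X_T+qF_T)+X^\epsilon_T\ge 0$, so $\bar X+X^\epsilon\in\mathcal X(x,q)$; moreover $(\bar X+X^\epsilon)_T-\int_0^Tc^\epsilon_t\,d\kappa_t+qF_T=(\bar X_T+qF_T)+(X^\epsilon_T-\int_0^Tc^\epsilon_t\,d\kappa_t)\ge 0$, so $c^\epsilon\in\mathcal A(x,q)$ and hence $u(x,q)\ge\E[\int_0^TU(t,\omega,c^\epsilon_t)\,d\kappa_t]>-\infty$. (The complementary bound $u<\infty$ on $\mathcal K$, not needed for \eqref{cond1}, would follow from $\mathcal A(x,q)\subseteq\mathcal A(x+X'_0\max_i|q_i|,0)$ --- obtained by absorbing $-qF_T$ into $X'_T\max_i|q_i|\ge|qF_T|$, with $X'$ from Assumption~\ref{assumptionOnEndowment} --- together with $w<\infty$, itself a consequence of $w(x)\le\tilde w(y)+xy$ and \eqref{cond2}.)

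For \eqref{cond2}$\Rightarrow$\eqref{cond1}, the dual part (the main obstacle): fix $(y,r)\in\mathcal L$. By \cite{HugonnierKramkov2004} and Assumption~\ref{assumptionOnEndowment} --- which gives $\E_Q[|F^i_T|]\le\E_Q[X'_T]\le X'_0<\infty$ for all $Q\in\mathcal M$, so $F_T$ is $Q$-integrable --- one has $r=yp$ with $p$ in the relative interior of the bounded convex set $P:=\{\E_Q[F_T]:Q\in\mathcal M\}$ ($\mathcal M$, hence $P$, is convex because convex combinations of equivalent local martingale measures lie in $\mathcal M$). I would then produce $Z\in\mathcal Z$ whose associated measure prices $F_T$ at $p$ and satisfies $\E[\int_0^TV^+(t,\omega,yZ_t)\,d\kappa_t]<\infty$; since such $yZ$ gives rise to an element of $\mathcal Y(y,r)$ (by \eqref{defOfY} and the analysis of the dual domain in \cite{HugonnierKramkov2004}), this yields $v(y,r)\le\E[\int_0^TV(t,\omega,yZ_t)\,d\kappa_t]<\infty$. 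To build $Z$: choose $\alpha\in(0,1)$ so small that every point of the affine hull of $P$ lying within distance $\tfrac{\alpha}{1-\alpha}\,\mathrm{diam}(P)$ of $p$ still belongs to $P$ (possible since $p$ is in the relative interior of the bounded set $P$); set $s:=\alpha y$ and, using $\tilde w(s)<\infty$ --- this is precisely where finiteness of $\tilde w$ at \emph{every} positive level is needed --- pick $Z^{(s)}\in\mathcal Z$ with $\E[\int_0^TV^+(t,\omega,sZ^{(s)}_t)\,d\kappa_t]<\infty$ and associated price $p^{(s)}\in P$; put $p':=\tfrac{p-\alpha p^{(s)}}{1-\alpha}$, which lies in $P$ by the choice of $\alpha$, pick $Z'\in\mathcal Z$ pricing $F_T$ at $p'$, and set $Z:=\alpha Z^{(s)}+(1-\alpha)Z'\in\mathcal Z$. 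Then $Z$ prices $F_T$ at $\alpha p^{(s)}+(1-\alpha)p'=p$, while $yZ\ge sZ^{(s)}$ and the monotonicity of $V(t,\omega,\cdot)$ give $V^+(\cdot,yZ)\le V^+(\cdot,sZ^{(s)})$, hence $\E[\int_0^TV(t,\omega,yZ_t)\,d\kappa_t]\le\E[\int_0^TV^+(t,\omega,sZ^{(s)}_t)\,d\kappa_t]<\infty$, as wanted. (If this only yields $v<\infty$ on a dense subset of $\mathcal L$, one concludes via the convexity of $v$ on ${\rm cl}\,\mathcal L$: $\{v<\infty\}$ is then convex and, being dense in the relatively open convex set $\mathcal L$, contains it.)

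The step I expect to be the main obstacle is the dual one. It rests on: the precise Hugonnier--Kramkov analysis of the dual domain $\mathcal Y(y,r)$ (notably the verification that the density processes $yZ$ produce genuine elements of $\mathcal Y(y,r)$ when $y$ times the $Q_Z$-price of $F_T$ equals $r$, which may in fact require working with supermartingale deflators rather than literal martingale densities); the identification ${\rm cl}\,\mathcal K=\{(x,q):x+qp\ge 0\ \text{for all}\ p\in{\rm cl}\,P\}$ and the ensuing $\mathcal L=\{(y,yp):y>0,\ p\in\ri P\}$; and the routine but necessary facts that $\tilde w(s)<\infty$ forces a density process $sZ^{(s)}$, $Z^{(s)}\in\mathcal Z$, with finite $V^+$-integral, together with the attendant measurability and integrability bookkeeping.
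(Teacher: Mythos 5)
Your forward implication and the primal half of the reverse implication are sound and essentially agree with the paper (the paper phrases the primal step as a convex combination $(x,q)=\lambda(x_1,q_1)+(1-\lambda)(x_2,0)$ inside the cone $\mathcal K$ rather than as an additive superposition of acceptable processes, but the mechanism --- pad a process in $\mathcal X(x-\epsilon,q)$ with a utility-finite consumption from the endowment-free problem and use monotonicity of $U$ --- is the same). The dual half is where you diverge from the paper, and it is also where your argument has a genuine gap: the step you dismiss as ``routine bookkeeping,'' namely that $\tilde w(s)<\infty$ yields a martingale density $Z^{(s)}\in\mathcal Z$ with $\mathbb E[\int_0^TV^+(t,\omega,sZ^{(s)}_t)\,d\kappa_t]<\infty$. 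By \eqref{problemForTildeW}, $\tilde w(s)$ is an infimum over $\mathcal Y(s)$, which by \eqref{oldY} is the \emph{closure in measure} of the solid hull of $\{sZ:Z\in\mathcal Z\}$. An element $Y$ of that closure with finite $V^+$-integral need not be dominated by $sZ$ for any single $Z\in\mathcal Z$, so the monotonicity of $V$ gives you nothing, and convergence in measure of approximants $Y^n\le sZ^n$ transfers no control on $\mathbb E[\int V^+(sZ^n)\,d\kappa]$. The assertion that finiteness of the relaxed dual value function implies finiteness of the dual value function restricted to genuine densities is a substantive theorem in this literature (it is the content of results like Theorem 3 in Kramkov--Schachermayer 2003), not a definition-chase; in the stochastic-clock setting it would need its own proof, and your argument stands or falls on it. A secondary issue in the same step: even granted such a $Z^{(s)}$, Lemma~\ref{Lemma9} puts density processes into $\mathcal Y(1,p)$ only for measures in $\mathcal M'(p)$ (those under which $X'$ is a uniformly integrable martingale), whereas your $Z^{(s)}$ is only known to come from $\mathcal M$, so its ``price'' of $F_T$ need not land you in the dual domain you want.

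The paper circumvents exactly this obstacle by never leaving the enlarged dual domain. It forms $\mathcal D=\operatorname{cl}\bigcup_{(y,r)\in\mathcal B}\mathcal Y(y,r)$ with $\mathcal B=\{(y,r)\in\mathcal L:y\le 1\}$, checks via Proposition~\ref{Proposition1} that the pair $(\mathcal A(1,0),\mathcal D)$ satisfies the abstract hypotheses of Theorem~3.2 of \cite{Mostovyi2011}, and reads off that the dual optimizer $Y(x)=U'(\cdot,\hat c(x))$ of the \emph{endowment-free} problem lies in $w'(x)\mathcal D$ and has finite $V$-value; a Fatou argument then places it in some $\mathcal Y(y,r)$ with $(y,r)\in w'(x)\operatorname{cl}\mathcal B$, so the set $\mathcal E=\{v<\infty\}$ is nonempty and, by the Inada condition on $w$, clusters at the origin. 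Finiteness is then propagated to every $(y,r)\in\mathcal L$ by writing $(y,r)=\lambda(y_1,r_1)+(1-\lambda)(y_2,r_2)$ with $(y_1,r_1)\in\mathcal L$ and $(y_2,r_2)\in\mathcal E$ small, and using that $V$ is decreasing --- structurally the same $\alpha$-mixing trick as yours, but with the ``good'' element being an optimizer in the closed domain rather than a density at a prescribed price. If you want to salvage your route, you would need to either prove the density-restriction theorem in this setting or replace your $sZ^{(s)}$ by a near-optimizer of \eqref{problemForTildeW} and then, as the paper does, identify which $(y_2,r_2)$ it is dual-feasible for instead of prescribing its price in advance.
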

Note that (\ref{cond2}) is precisely the condition that was used
by Mostovyi \cite{Mostovyi2011} in the statement of the main
theorem.

\section{Proofs}\label{proofOfMainThm}
We begin from a proposition that gives a useful characterization of
the primal and dual domains.
\begin{Proposition}\label{Proposition1}
Under the conditions (\ref{stochasticClock}), (\ref{MisNotEmpty}), and Assumption \ref{assumptionOnEndowment},
 the
families $\left( \mathcal A(x, q)\right)_{(x,q)\in{\rm
cl}\mathcal K}$ and $\left(\mathcal Y(y,r)\right)_{(y, r)\in{\rm
cl}\mathcal L}$ defined in (\ref{defOfA}) and (\ref{defOfY}) have
the following properties:
\begin{enumerate}[(i)]
\item For any $(x,q)\in\mathcal K,$ the set $\mathcal A(x, q)$
contains a strictly positive constant process. For every $(x,q)\in{\rm
cl}\mathcal K$ a nonnegative optional process $c$ belongs to
$\mathcal A(x, q)$ if and only if
\begin{equation}\label{26}
\mathbb E\left[ \int_0^Tc_tY_td\kappa_t\right]\leq xy + qr~~~~~{\rm
for~every~}(y, r)\in{\rm cl}\mathcal L~{\rm and}~Y\in\mathcal Y(y,r).
\end{equation}
\item For every $(y, r)\in\mathcal L$ the set $\mathcal Y(y,r)$
contains a strictly positive process. For every $(y,r)\in{\rm
cl}\mathcal L$ a nonnegative process $Y$ belongs to $\mathcal Y(y,
r)$ if and only if
\begin{equation}\label{27}
\mathbb E\left[ \int_0^Tc_tY_td\kappa_t\right]\leq xy + qr\quad{\rm
for~every~}(x, q)\in{\rm cl}\mathcal K~{\rm and}~c\in\mathcal A(x,q).
\end{equation}
\end{enumerate}
\end{Proposition}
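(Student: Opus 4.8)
The plan is to dispose first of the two ``only if'' implications, which are immediate from the definitions: if $c\in\mathcal{A}(x,q)$ then for every $(y,r)\in{\rm cl}\,\mathcal{L}$ and $Y\in\mathcal{Y}(y,r)$ the bound $\mathbb{E}[\int_0^T c_tY_t\,d\kappa_t]\le xy+qr$ is built into the definition \eqref{defOfY} of $\mathcal{Y}(y,r)$, so \eqref{26} holds; and \eqref{27} is literally the defining property of $\mathcal{Y}(y,r)$ in \eqref{defOfY}. Hence the substance of the proposition is (a) the existence of strictly positive elements and (b) the two ``if'' implications. I would carry out the argument on the finite measure space $\bar\Omega\set[0,T]\times\Omega$ with $\mu\set d\kappa\times\mathbb{P}$ (finite by \eqref{stochasticClock}), on which $\mathcal{A}(x,q)$ and $\mathcal{Y}(y,r)$ become convex solid subsets of $L^0_+(\bar\Omega,\mu)$ paired via $\langle c,Y\rangle\set\mathbb{E}[\int_0^T c_tY_t\,d\kappa_t]$, and combine the superhedging duality of Hugonnier and Kramkov \cite{HugonnierKramkov2004} with the bipolar theorem of Brannath and Schachermayer, following Mostovyi \cite{Mostovyi2011} but now carrying along the endowment coordinates.

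For the strictly positive elements: in (i), since $\mathcal{K}$ is open we have $(x-\epsilon,q)\in\mathcal{K}$ for some $\epsilon>0$; choosing $X\in\mathcal{X}(x-\epsilon,q)$, the process $X+\epsilon$ is acceptable with initial value $x$ and $(X+\epsilon)_T+qF_T\ge\epsilon$, so --- because $\kappa_T\le A$ --- the constant process $c\equiv\epsilon/A$ satisfies $(X+\epsilon)_T-\int_0^T c_t\,d\kappa_t+qF_T\ge X_T+qF_T\ge0$, i.e. $c\in\mathcal{A}(x,q)$. In (ii), for $(y,r)\in\mathcal{L}$ I would invoke the analysis of $-\mathcal{K}^\circ$ and of the maximal process from Assumption \ref{assumptionOnEndowment} carried out in \cite{HugonnierKramkov2004}, which produces, for a point in the relative interior, a strictly positive c\`adl\`ag supermartingale $Y$ with $Y_0=y$ that deflates wealth-plus-endowment; such a $Y$ lies in $\mathcal{Y}(y)$ (it is a supermartingale deflator), and using its supermartingale property together with the consumption drift $-c\,d\kappa$ in the wealth dynamics one checks it obeys the budget inequalities of \eqref{defOfY}, hence $Y\in\mathcal{Y}(y,r)$.

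For the ``if'' direction of (i): a nonnegative optional $c$ obeying \eqref{26} yields the nonnegative $\mathcal{F}_T$-measurable claim $\xi\set\int_0^T c_t\,d\kappa_t$, and by \eqref{defOfA} one has $c\in\mathcal{A}(x,q)$ exactly when $\xi\le X_T+qF_T$ for some acceptable $X$ with $X_0=x$ (the requirement $X_T+qF_T\ge0$ being automatic since $\xi\ge0$). The Hugonnier--Kramkov superhedging theorem characterizes the latter through inequalities of the form $\mathbb{E}[\xi Z_T]\le xy+qr$ over the relevant equivalent martingale measures (and the closure of their density processes), parametrized by $(y,r)\in{\rm cl}\,\mathcal{L}$; the optional-projection identity $\mathbb{E}[\xi Z_T]=\mathbb{E}[\int_0^T c_t Z_t\,d\kappa_t]$ (equality for martingale densities, ``$\le$'' for their supermartingale limits) together with the observation --- from the previous paragraph --- that the scaled density process $yZ$ (or its limiting supermartingale deflator) lies in $\mathcal{Y}(y,r)$ then shows that \eqref{26} supplies precisely the inequalities required by the superhedging theorem, so $c\in\mathcal{A}(x,q)$. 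For the ``if'' direction of (ii): if a nonnegative $Y$ obeys \eqref{27}, the budget part of \eqref{defOfY} is already satisfied, and it remains to prove $Y\in\mathcal{Y}(y)$; restricting \eqref{27} to $q=0$ (legitimate since $(x,0)\in\mathcal{K}$ for all $x>0$ by Assumption \ref{assumptionOnEndowment} and Lemma 1 in \cite{HugonnierKramkov2004}) and using that $\mathcal{A}(x,0)$ contains every consumption stream superhedgeable from $x$ by a nonnegative self-financing wealth process, the no-endowment bipolar identity of \cite{Mostovyi2011} gives $Y\in\mathcal{Y}(y)$.

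I expect the main obstacle to be the precise reconciliation, in the two ``if'' directions, of the dual objects furnished by the Hugonnier--Kramkov superhedging theorem --- which is stated for terminal values of acceptable processes --- with the consumption-budget sets $\mathcal{Y}(y,r)$ of \eqref{defOfY}: one has to keep track of the closure operations, of the relative-interior/boundary bookkeeping linking $\mathcal{K},{\rm cl}\,\mathcal{K}$ with $\mathcal{L},{\rm cl}\,\mathcal{L}$, and --- most delicately --- of the fact that an acceptable process need not be a supermartingale under every equivalent local martingale measure, so that only the ``admissible'' dual measures (those turning the maximal process of Assumption \ref{assumptionOnEndowment} into a martingale, as isolated in \cite{HugonnierKramkov2004}) may be used. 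The remaining ingredient, routine but indispensable, is the closedness of $\mathcal{A}(x,q)$ and $\mathcal{Y}(y,r)$ for convergence in $\mu$-measure, which licenses the bipolar theorem; granting these, the argument is the combination of superhedging duality and the bipolar theorem outlined above.
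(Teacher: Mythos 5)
Your proposal is correct and follows essentially the same route as the paper: the trivial ``only if'' directions from the definition of $\mathcal Y(y,r)$; the constant $\delta/A$ process via openness of $\mathcal K$ and $\kappa_T\le A$; the primal ``if'' direction via the Hugonnier--Kramkov superhedging theorem applied to $h=\int_0^T c_t\,d\kappa_t-qF_T$ tested against the measures $\mathbb Q\in\mathcal M'(p)$ (whose density processes one first checks lie in $\mathcal Y(1,p)$ by the supermartingale/optional-projection argument you describe); and the dual ``if'' direction by restricting to $q=0$ and invoking the no-endowment bipolar result of \cite{Mostovyi2011}. The only cosmetic difference is that the paper realizes the strictly positive element of $\mathcal Y(1,p)$ concretely as the c\`adl\`ag density process of some $\mathbb Q\in\mathcal M'(p)$ (nonempty by Lemma 8 of \cite{HugonnierKramkov2004}) rather than an abstract supermartingale deflator, and reduces general $(y,r)$ to $(1,p)$ by the scaling $a\mathcal Y(y,r)=\mathcal Y(ay,ar)$.
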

The proof of Proposition \ref{Proposition1} is given via 
several lemmas.
As in \cite{HugonnierKramkov2004}, we define the set $\mathcal P$ to
be the set of points in the intersection of $\mathcal L$ and the
hyperplane $y\equiv 1,$ that is,
\begin{equation}\label{setP}
\mathcal P \set \left\{ p\in\mathbb R^{N}:(1,p)\in\mathcal
L\right\}.
\end{equation}
Note that under the Assumption \ref{assumptionOnEndowment} and (\ref{MisNotEmpty}), it
follows from  Lemma 1 in \cite{HugonnierKramkov2004} that the set
$\mathcal P$ is bounded.

Let $\mathcal M'$ be the set of equivalent local martingale measures
$\mathbb Q$, such that the process $X'$ (in the Assumption
\ref{assumptionOnEndowment}) is a uniformly integrable martingale under $\mathbb Q$.
According to Theorem 5.2 in Delbaen and Schachermayer
\cite{Delbaen-Schachermayer1997}, $\mathcal M'$ is a nonempty, convex
subset of $\mathcal M$, which is dense in $\mathcal M$ with respect
to the variation norm.

For every $p\in\mathcal P$ we denote
\begin{displaymath}
\mathcal M'(p) \set \left\{\mathbb
Q\in\mathcal M':~\mathbb {E_Q}\left[ F_T\right]= p \right\}.
\end{displaymath} It
follows from  Lemma 8 in \cite{HugonnierKramkov2004} that (under
condition  (\ref{MisNotEmpty}) and Assumption
\ref{assumptionOnEndowment}) $\mathcal M'(p)$ is non-empty for every
$p\in\mathcal P$ and
\begin{equation}\label{7-13}
\bigcup\limits_{p\in\mathcal P}\mathcal M'(p) = \mathcal
M',
\end{equation}
\begin{Lemma}\label{Lemma9}
Let the assumptions of Proposition \ref{Proposition1} hold true and $p\in\mathcal P$.
Then the c\'adl\'ag density process of any
$\mathbb Q\in\mathcal M'(p)$ belongs to $\mathcal Y(1, p)$.
\end{Lemma}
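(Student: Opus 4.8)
The goal is to show that if $Z = (Z_t)_{t\in[0,T]}$ is the c\`adl\`ag density process of some $\mathbb Q\in\mathcal M'(p)$, then $Z\in\mathcal Y(1,p)$. Recalling the definition \eqref{defOfY}, this amounts to checking two things: first, that $Z\in\mathcal Y(1)$, and second, that
\begin{equation}\nonumber
\mathbb E\left[\int_0^T c_t Z_t\, d\kappa_t\right]\leq x + q p
\end{equation}
for every $(x,q)\in\mathrm{cl}\,\mathcal K$ and every $c\in\mathcal A(x,q)$. The first point is essentially immediate: since $Z$ is the density process of an equivalent local martingale measure, $Z\in\mathcal Z$, and taking $y=1$ with $Z$ itself as the dominating density shows $Z\in\mathcal Y(1)$ directly from \eqref{oldY}. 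So the substance of the lemma is the budget-type inequality, and I expect that to be the main obstacle.

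The plan is to fix $(x,q)\in\mathrm{cl}\,\mathcal K$ and $c\in\mathcal A(x,q)$. By the definition \eqref{defOfA}, there exists an acceptable process $X\in\mathcal X(x,q)$ with $X_T - \int_0^T c_t\, d\kappa_t + qF_T\geq 0$. Write $X = X' - X''$ with $X'$ a nonnegative self-financing wealth process and $X''$ maximal; the idea is to test against $\mathbb Q\in\mathcal M'(p)$, under which both the maximality structure and the integrability of the endowment bound $X'$ from the Assumption \ref{assumptionOnEndowment} can be exploited. The key facts to invoke are from Delbaen and Schachermayer \cite{Delbaen-Schachermayer1997} (and its use in \cite{HugonnierKramkov2004}): a nonnegative wealth process that is maximal is a uniformly integrable martingale under some $\mathbb Q\in\mathcal M$, and more to the point, acceptable processes admit the property that $\mathbb E_{\mathbb Q}[X_T]\leq X_0 = x$ for all $\mathbb Q\in\mathcal M'$ (using that $\mathcal M'$ consists of measures making the relevant maximal processes uniformly integrable martingales). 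Using $\mathbb E_{\mathbb Q}[F_T] = p$, which holds precisely because $\mathbb Q\in\mathcal M'(p)$, I would then take $\mathbb Q$-expectations in the inequality $X_T + qF_T\geq \int_0^T c_t\, d\kappa_t$. Since $Z$ is the density of $\mathbb Q$ and $\kappa$ is $\mathbb P$-a.s. bounded by $A$, the change of measure gives $\mathbb E[\int_0^T c_t Z_t\, d\kappa_t] = \mathbb E_{\mathbb Q}[\int_0^T c_t\, d\kappa_t] \leq \mathbb E_{\mathbb Q}[X_T] + q\,\mathbb E_{\mathbb Q}[F_T]\leq x + qp$, which is exactly what is needed.

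The delicate point, and where I'd spend the most care, is justifying the interchange of $\mathbb Q$-expectation and the $d\kappa_t$-integral, and justifying $\mathbb E_{\mathbb Q}[X_T]\leq x$ for an acceptable (not merely nonnegative) process; the latter requires knowing that the negative part $X''$ is a uniformly integrable $\mathbb Q$-martingale, which is why one restricts to $\mathbb Q\in\mathcal M'$ rather than arbitrary $\mathbb Q\in\mathcal M$, and one must also ensure $X'$ is a $\mathbb Q$-supermartingale so that $\mathbb E_{\mathbb Q}[X'_T]\leq x + \mathbb E_{\mathbb Q}[X''_T]$ closes the estimate. For the integral interchange, nonnegativity of $c$ and $\kappa$ together with Tonelli's theorem (applied to the product measure built from $\mathbb Q$ and $d\kappa$) handles the case where both sides may be $+\infty$, and the resulting bound $x+qp<\infty$ then also shows finiteness a posteriori. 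Once the inequality \eqref{27} is verified for all such $(x,q)$ and $c$, part (ii) of Proposition \ref{Proposition1} (or directly the definition \eqref{defOfY}) yields $Z\in\mathcal Y(1,p)$, completing the proof.
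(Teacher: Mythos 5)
Your proposal matches the paper's proof in both structure and substance: the paper likewise fixes $(x,q)\in{\rm cl}\mathcal K$, $c\in\mathcal A(x,q)$ and $X\in\mathcal X(x,q)$ with $X_T+qF_T\geq\int_0^T c_t\,d\kappa_t\geq 0$, observes that $X$ is a supermartingale under any $\mathbb Q\in\mathcal M'(p)$, uses $\mathbb E_{\mathbb Q}\left[F_T\right]=p$ to get $x+qp\geq \mathbb E_{\mathbb Q}\left[\int_0^T c_t\,d\kappa_t\right]$, and then rewrites this as $\mathbb E\left[\int_0^T \frac{d\mathbb Q_t}{d\mathbb P_t}c_t\,d\kappa_t\right]$. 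The one small imprecision is that this last identity is not plain Tonelli but the change-of-measure formula for optional integrands against $d\kappa$, which genuinely needs the density \emph{process} (the paper invokes localization and integration by parts); you correctly single this out as the delicate step even if you name a slightly wrong tool for it.
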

\begin{proof} Fix an arbitrary $(x,q)\in{\rm cl}\mathcal K,$ $c\in\mathcal
A(x,q),$ and $X\in\mathcal X(x, q)$ such that $X_T+qF_T \geq
\int_0^Tc_td\kappa_t\geq 0$. Notice that $X$ is a supermartingale under $\mathbb Q$. Therefore, 
taking expectation under $\mathbb
Q\in\mathcal M'(p)$ and using localization and integration by parts we get:
\begin{equation}\nonumber
x + qp \geq \mathbb {E_Q}\left[ \int_0^T c_td\kappa_t\right] =
\mathbb E\left[ \int_0^T\frac{d\mathbb Q_t}{d\mathbb
P_t}c_td\kappa_t\right].
\end{equation}
\end{proof}

\begin{Lemma}\label{Lemma10}
Let the assumptions of Proposition \ref{Proposition1} hold true.
Then for every $(x, q)\in{\rm cl}\mathcal K$, a nonnegative optional
process $c$ belongs to $\mathcal A(x,q)$ if and only if
\begin{equation}\label{32}
\mathbb {E_Q} \left[\int_0^T c_t d\kappa_t\right] \leq x + qp~~{\rm
for ~all ~}p\in\mathcal P~~{\rm and}~~\mathbb Q \in\mathcal M'(p).
\end{equation}
\end{Lemma}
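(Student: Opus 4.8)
\emph{Plan.} I would prove the two implications separately; the substantive one is ``if''.

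\emph{``Only if''.} This direction needs nothing new: the inequality $x+qp\ge\mathbb{E}_{\mathbb{Q}}\!\left[\int_0^T c_t\,d\kappa_t\right]$, valid for every $(x,q)\in\mathrm{cl}\,\mathcal{K}$, $c\in\mathcal{A}(x,q)$, $p\in\mathcal{P}$, and $\mathbb{Q}\in\mathcal{M}'(p)$, is precisely the chain of inequalities carried out in the proof of Lemma \ref{Lemma9} (there one notes that an acceptable $X\in\mathcal{X}(x,q)$ is a $\mathbb{Q}$-supermartingale and that $\mathbb{E}_{\mathbb{Q}}[X_T]\le x$, $\mathbb{E}_{\mathbb{Q}}[F_T]=p$). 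Hence (\ref{32}) holds whenever $c\in\mathcal{A}(x,q)$.

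\emph{``If''.} The idea is to recast (\ref{32}) as a superreplication requirement. Given a nonnegative optional $c$ satisfying (\ref{32}), set $B\set\int_0^T c_t\,d\kappa_t-qF_T$ and $a\set\max_{1\le i\le N}|q_i|$. By Assumption \ref{assumptionOnEndowment}, $|qF_T|\le a\sum_{i=1}^N|F^i_T|\le aX'_T$, so $B\ge -aX'_T$; moreover $\mathcal{M}'(p)$ is non-empty for some $p$ and each $\mathbb{Q}\in\mathcal{M}'$ is equivalent to $\mathbb{P}$, so (\ref{32}) forces $\int_0^T c_t\,d\kappa_t<\infty$ $\mathbb{P}$-a.s.; thus $B$ is a finite-valued $\mathcal{F}_T$-measurable random variable dominated from below by the terminal value of the maximal process $aX'$. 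Using $\bigcup_{p\in\mathcal{P}}\mathcal{M}'(p)=\mathcal{M}'$ together with $\mathbb{E}_{\mathbb{Q}}[F_T]=p$ for $\mathbb{Q}\in\mathcal{M}'(p)$, condition (\ref{32}) says exactly that $\mathbb{E}_{\mathbb{Q}}[B]\le x$ for every $\mathbb{Q}\in\mathcal{M}'$. Now I would invoke the superreplication theorem for \emph{acceptable} wealth processes of Delbaen and Schachermayer \cite{Delbaen-Schachermayer1997}, in the form used by Hugonnier and Kramkov \cite{HugonnierKramkov2004}: for a claim bounded below by $-aX'_T$ the relevant dual set is precisely $\mathcal{M}'$, so $\sup_{\mathbb{Q}\in\mathcal{M}'}\mathbb{E}_{\mathbb{Q}}[B]\le x$ produces an acceptable process $X$ with $X_0\le x$ and $X_T\ge B$. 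Adding the constant $x-X_0\ge 0$ we may take $X_0=x$; then $X_T+qF_T\ge\int_0^T c_t\,d\kappa_t\ge 0$, so $X\in\mathcal{X}(x,q)$, and $X_T-\int_0^T c_t\,d\kappa_t+qF_T\ge 0$ shows $c\in\mathcal{A}(x,q)$.

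\emph{Main obstacle.} The delicate ingredient is the acceptable-process superreplication duality rather than the classical optional decomposition. Allowing the subtraction of the maximal process $aX'$ is what shrinks the dual set from all of $\mathcal{M}$ to $\mathcal{M}'$, and this is exactly what makes (\ref{32}) sufficient: a naive attempt to superhedge the nonnegative claim $B+aX'_T$ with an ordinary nonnegative wealth process would require $\sup_{\mathbb{Q}\in\mathcal{M}}\mathbb{E}_{\mathbb{Q}}[B+aX'_T]\le x+aX'_0$, and this supremum may well be $+\infty$. Checking that the unbounded claim $B$, bounded below only by $-aX'_T$, fits the hypotheses of the cited superreplication result — and, en route, that $\int_0^T c_t\,d\kappa_t$ is a well-behaved $\mathcal{F}_T$-measurable terminal payoff (using $\kappa_T\le A$) — is where the care is needed; the remaining steps are routine bookkeeping.
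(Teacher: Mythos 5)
Your proposal is correct and follows essentially the same route as the paper: the ``only if'' direction via the supermartingale argument of Lemma \ref{Lemma9}, and the ``if'' direction by forming $h=\int_0^T c_t\,d\kappa_t-qF_T$, bounding it below by $-\max_i|q_i|\cdot X'_T$, using $\bigcup_{p\in\mathcal P}\mathcal M'(p)=\mathcal M'$ to translate (\ref{32}) into $\sup_{\mathbb Q\in\mathcal M'}\mathbb E_{\mathbb Q}[h]\leq x$, and invoking the acceptable-process superreplication result (Lemma 5 of \cite{HugonnierKramkov2004}). The only cosmetic difference is that you shift $X$ by the constant $x-X_0$, where the paper simply uses the inclusion $\mathcal A(\alpha(h),q)\subseteq\mathcal A(x,q)$.
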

\begin{proof} If $c\in\mathcal A(x, q)$ for $(x, q)\in{\rm cl}\mathcal K$ then
the validity of (\ref{32}) follows from Lemma \ref{Lemma9}.
Viceversa, let $c$ be a nonnegative optional process such that
(\ref{32}) holds. Denote
\begin{equation}\nonumber h \set
\int_0^Tc_td\kappa_t - qF_T,\quad M \set \max\limits_{1\leq i\leq
N}|q_i|.
\end{equation}
Then $h\geq -MX'_T$ and
\begin{equation}\nonumber
\begin{array}{rcl}
\alpha(h) \set \sup\limits_{\mathbb Q\in\mathcal M'}\mathbb
{E_Q}\left[ h\right] &=& \sup\limits_{p\in\mathcal
P}\sup\limits_{\mathbb
Q\in\mathcal M'(p)}\mathbb{E_Q}\left[ h\right]\\
&=&\sup\limits_{p\in\mathcal P}\sup\limits_{\mathbb Q\in\mathcal
M'(p)}\left(\mathbb {E_Q} \left[\int_0^T c_t
d\kappa_t\right]-qp\right)\leq x,\\
\end{array}
\end{equation}
where in the second equality we used (\ref{7-13}). Lemma 5 in
\cite{HugonnierKramkov2004} implies the existence of an acceptable
process $X$ such that $X_0 = \alpha(h)$ and $X_T\geq h.$ It follows
that
\begin{equation}\nonumber
X_T + qF_T \geq \int_0^Tc_td\kappa_t.
\end{equation}
Therefore $c\in\mathcal A(\alpha(h), q)\subseteq\mathcal A(x, q).$
\end{proof}

 \begin{proof}[Proof of Proposition \ref{Proposition1}]
 We prove the item $(i)$ first.
Fix $(x, q)\in\mathcal K$. Since $\mathcal K$ is an open set, there
exists $\delta >0$ such that $(x-\delta, q)\in\mathcal K.$ Take
$X\in\mathcal X(x-\delta,q)$ then $Z\set X+\delta\in\mathcal X(x, q)$. Consequently
\begin{equation}\nonumber
Z_T+qF_T \geq \delta \geq \int_0^T \left(\delta/A \right)d\kappa_t,
\end{equation}
where $A$ is the constant in (\ref{stochasticClock}). 
Therefore the process that takes the constant value $\delta/A$ belongs
to $\mathcal A(x,q)$.

Let $c$ be a nonnegative optional process such that (\ref{26})
holds. For every
$p\in\mathcal P$, it follows from Lemma \ref{Lemma9} that the c\'adl\'ag density process of
any $\mathbb Q\in\mathcal M'(p)$ is in $\mathcal Y(1, p)$.  Consequently, $c$ satisfies
(\ref{32}). It follows from Lemma \ref{Lemma10} that $c\in\mathcal
A(x,q).$ This concludes the proof of the item $(i)$.

To prove the assertion of the item $(ii),$  let us observe that
\begin{equation}\nonumber
a\mathcal Y(y,r) = \mathcal Y(ay, ar)\quad {\rm for~every~}a>0{\rm ~and~} (y, r)\in\mathcal L.
\end{equation}
Therefore it suffices to prove the existence of a strictly positive process for $(y,r)= (1,p)$, $p\in\mathcal
P$. Fix an arbitrary $p\in\mathcal P$. By Lemma 8 in \cite{HugonnierKramkov2004}, we deduce the existence of $\mathbb Q\in\mathcal
M'(p)$. By Lemma \ref{Lemma9}, the c\'adl\'ag density process
$\left( \frac{d\mathbb Q_t}{d\mathbb P_t}\right)_{t\in[0,T]}$ is in $\mathcal Y(1,p).$
Since $\mathbb Q$ is equivalent to $\mathbb P$,
$\left(\frac{d\mathbb Q_t}{d\mathbb P_t}\right)_{t\in[0,T]}$ is strictly positive $\mathbb P$ a.s.

Similarly, it suffices to consider $(y, r)\in{\rm cl}\mathcal L$ with $y=1$. For every $(1,p)\in {\rm cl}\mathcal L$, if $Y\in\mathcal Y(1,p)$,
condition (\ref{27}) follows from the definition of the set
$\mathcal Y(1, p).$ Conversely, let $Y$ be a nonnegative process
such that (\ref{27}) holds for $y=1$.  Then
\begin{equation}\nonumber
\mathbb E\left[\int_0^T c_tY_td\kappa_t \right] \leq 1\quad {\rm for~all~}c\in\mathcal A(1,0).
\end{equation}
Note that $\mathcal A(1, 0)$ is nonempty by 
Lemma 1 in \cite{HugonnierKramkov2004} (in view of  (\ref{MisNotEmpty}) and Assumption \ref{assumptionOnEndowment}).
Therefore, by Proposition 4.4 of \cite{Mostovyi2011} $Y\in\mathcal Y(1)$ (the
set $\mathcal Y(1)$ is defined in (\ref{oldY})) is such that (\ref{27})
holds, i.e. $Y\in\mathcal Y(1, p)$.

 \end{proof}

\begin{Lemma}\label{subConjugacy}
Under the conditions of Theorem \ref{mainTheorem}, for every $(x,q)\in{\rm
  cl}\mathcal K$ and
$(y,r)\in{\rm cl}\mathcal L$ we have
\begin{equation}\label{subConjugacyEqn}
u(x, q)\leq v(y, r) + xy + qr.
\end{equation}
As a result $u$  and $v$ are real-valued functions on $\mathcal K$ and  $\mathcal
L$, respectively, $u<\infty$ and $v>-\infty$ on $\mathbb R^{N+1}$.
\end{Lemma}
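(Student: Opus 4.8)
The plan is to prove the Fenchel-type inequality (\ref{subConjugacyEqn}) directly from the abstract bipolar-type duality between the primal and dual domains established in Proposition \ref{Proposition1}, and then extract the finiteness/real-valuedness statements as easy consequences. The key observation is the pointwise conjugacy $V(t,\omega,y) \geq U(t,\omega,x) - xy$ for all $x > 0$, $y \geq 0$, which integrates to a global inequality once we pair a feasible consumption stream with a feasible dual process.

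First I would fix $(x,q) \in {\rm cl}\,\mathcal{K}$ and $(y,r) \in {\rm cl}\,\mathcal{L}$. If $u(x,q) = -\infty$ the inequality (\ref{subConjugacyEqn}) is trivial (recall $v > -\infty$ is exactly one of the conclusions we also want, so I must be slightly careful about circularity — but the left side being $-\infty$ makes it hold regardless). So assume $u(x,q) > -\infty$, and similarly we may assume $v(y,r) < \infty$, else the right side is $+\infty$ and there is nothing to prove. Now take any $c \in \mathcal{A}(x,q)$ with $\mathbb{E}\left[\int_0^T U(t,\omega,c_t)\,d\kappa_t\right] > -\infty$ and any $Y \in \mathcal{Y}(y,r)$ with $\mathbb{E}\left[\int_0^T V(t,\omega,Y_t)\,d\kappa_t\right] < \infty$. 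Pointwise, $U(t,\omega,c_t) \leq V(t,\omega,Y_t) + c_t Y_t$ whenever $c_t > 0$; on $\{c_t = 0\}$ one uses $U(t,\omega,0) = \lim_{x\downarrow 0}(U(t,\omega,x)) \leq \lim_{x\downarrow 0}(V(t,\omega,Y_t) + xY_t) = V(t,\omega,Y_t)$, so the inequality $U(t,\omega,c_t) \leq V(t,\omega,Y_t) + c_t Y_t$ holds $(d\kappa\times\mathbb{P})$-a.e. in all cases. Integrating against $d\kappa\times\mathbb{P}$ and using the integrability we arranged so that no $\infty - \infty$ arises:
\begin{equation}\nonumber
\mathbb{E}\left[\int_0^T U(t,\omega,c_t)\,d\kappa_t\right] \leq \mathbb{E}\left[\int_0^T V(t,\omega,Y_t)\,d\kappa_t\right] + \mathbb{E}\left[\int_0^T c_t Y_t\,d\kappa_t\right] \leq \mathbb{E}\left[\int_0^T V(t,\omega,Y_t)\,d\kappa_t\right] + xy + qr,
\end{equation}
where the last step is the budget inequality (\ref{26})/(\ref{27}) from Proposition \ref{Proposition1} (for $(x,q) \in {\rm cl}\,\mathcal{K}$, $c \in \mathcal{A}(x,q)$, $(y,r) \in {\rm cl}\,\mathcal{L}$, $Y \in \mathcal{Y}(y,r)$, which by item (i) of Proposition \ref{Proposition1} holds in exactly this generality). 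Taking the supremum over admissible $c$ on the left and the infimum over admissible $Y$ on the right yields (\ref{subConjugacyEqn}). The main subtlety — and the step I would be most careful about — is the integrability bookkeeping needed to split the integral of a sum into a sum of integrals without encountering $\infty - \infty$; this is handled precisely by restricting to $c$ with $\mathbb{E}\left[\int_0^T U^-\,d\kappa_t\right] < \infty$ and $Y$ with $\mathbb{E}\left[\int_0^T V^+\,d\kappa_t\right] < \infty$, which is legitimate because such $c$ and $Y$ are the only ones contributing to the sup and inf respectively (under the running assumptions $u(x,q)>-\infty$, $v(y,r)<\infty$), and one should note that $\mathbb{E}\left[\int_0^T c_tY_t\,d\kappa_t\right] \leq xy+qr < \infty$ keeps the cross term finite.

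Finally, for the consequences: given $(x,q) \in \mathcal{K}$, condition (\ref{cond1}) gives $u(x,q) > -\infty$; picking any $(y,r) \in \mathcal{L}$ (nonempty since $\mathcal{K}$ is open and nonempty, so its polar cone has nonempty relative interior) and using again (\ref{cond1}) to get $v(y,r) < \infty$, inequality (\ref{subConjugacyEqn}) forces $u(x,q) < \infty$; hence $u$ is real-valued on $\mathcal{K}$. Symmetrically, for $(y,r) \in \mathcal{L}$ we have $v(y,r) < \infty$ by (\ref{cond1}), and choosing any $(x,q) \in \mathcal{K}$ with $u(x,q) > -\infty$, (\ref{subConjugacyEqn}) gives $v(y,r) \geq u(x,q) - xy - qr > -\infty$, so $v$ is real-valued on $\mathcal{L}$. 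The global bounds $u < \infty$ and $v > -\infty$ on all of $\mathbb{R}^{N+1}$ follow because on $({\rm cl}\,\mathcal{K})^c$ we have $u = -\infty$ by definition, on ${\rm cl}\,\mathcal{K}$ inequality (\ref{subConjugacyEqn}) with any fixed $(y,r) \in \mathcal{L}$ (where $v(y,r)$ is finite) bounds $u$ above by the finite quantity $v(y,r) + xy + qr$; dually $v > -\infty$ everywhere, being $+\infty$ off ${\rm cl}\,\mathcal{L}$ and bounded below on ${\rm cl}\,\mathcal{L}$ via (\ref{subConjugacyEqn}) with a fixed $(x,q) \in \mathcal{K}$ having $u(x,q)$ finite.
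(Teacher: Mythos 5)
Your proposal is correct and follows essentially the same route as the paper: pointwise Fenchel conjugacy $U(t,\omega,c_t)\leq V(t,\omega,Y_t)+c_tY_t$ combined with the budget constraint $\mathbb E\left[\int_0^T c_tY_t\,d\kappa_t\right]\leq xy+qr$ from the definition of $\mathcal Y(y,r)$, followed by taking the supremum over $c$ and the infimum over $Y$. The paper's proof is terser (it omits the $\infty-\infty$ bookkeeping and the derivation of the finiteness consequences, which you spell out correctly), but there is no substantive difference in method.
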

\begin{proof} Fix $(x, q)\in{\rm cl}\mathcal K$ and $(y, r)\in{\rm cl}\mathcal
L$. Take an arbitrary $c\in\mathcal A(x, q)$ and  $Y\in\mathcal Y(y,r)$.
It follows from the definition of the conjugate field $V$ that
\begin{equation}\nonumber
\begin{array}{rcl}
\mathbb E\left[ \int_0^T U(t, \omega, c_t)d\kappa_t\right] &\leq
&\mathbb E\left[ \int_0^T U(t, \omega, c_t)d\kappa_t\right] +xy +qr
- \mathbb E\left[\int_0^Tc_tY_td\kappa_t\right]\\
&\leq&\mathbb E\left[ \int_0^T V(t, \omega, Y_t)d\kappa_t\right] +xy
+ qr .\\
\end{array}
\end{equation}
This implies inequality (\ref{subConjugacyEqn}).
The remaining assertions of the lemma follow.
\end{proof}

Let $\mathbf L^0 = \mathbf L^0\left(d\kappa\times\mathbb P\right)$
be the vector space of optional process on the stochastic basis
$\left(\Omega, \mathcal F,(\mathcal F_t)_{t\in[0,T]},\mathbb P\right).$

\begin{Lemma}\label{upperSemiContinuity} Let the conditions of Theorem \ref{mainTheorem} hold true. 
Then the function $u$ is upper
semi-continuous. For every $(x,q)\in\{ u>-\infty\}$ there exists a
unique maximizer to the problem (\ref{primalProblem}).
Likewise, the function $v$ is lower
semi-continuous. For every $(y,r)\in\{v<\infty\}$ there exists a
unique minimizer  to the problem (\ref{dualProblem}).
\end{Lemma}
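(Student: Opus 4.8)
The plan is to run the abstract convex-duality argument of Kramkov and Schachermayer \cite{KS} and Mostovyi \cite{Mostovyi2011}, now carried out over the pair of feasible families $\left(\mathcal A(x,q)\right)_{(x,q)\in{\rm cl}\,\mathcal K}$ and $\left(\mathcal Y(y,r)\right)_{(y,r)\in{\rm cl}\,\mathcal L}$ whose bipolar structure is recorded in Proposition \ref{Proposition1}. First I would collect the elementary structural facts: $d\kappa\times\mathbb P$ is a finite measure; every $\mathcal A(x,q)$ is convex and solid and every $\mathcal Y(y,r)$ is convex and solid; by Fatou's lemma applied to the linear constraints (\ref{26}) and (\ref{27}), both families are stable under $\left(d\kappa\times\mathbb P\right)$-a.e. convergence; and the map $(x,q)\mapsto\mathcal A(x,q)$ is ``concave'' while $(y,r)\mapsto\mathcal Y(y,r)$ is ``convex'' (a convex combination of points feasible for two parameter values is feasible for the convex combination of the parameters). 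Consequently $u$ is concave on $\mathcal K$ and $v$ is convex on $\mathcal L$, and since $u$ and $v$ are finite there by Lemma \ref{subConjugacy}, $u$ is continuous on the open set $\mathcal K$ and $v$ is continuous on $\mathcal L$.

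For the primal maximizer, fix $(x,q)\in\{u>-\infty\}$; then $(x,q)\in{\rm cl}\,\mathcal K$ and, by Lemma \ref{subConjugacy}, $u(x,q)\in\mathbb R$. Take a maximizing sequence $\left(c^n\right)\subset\mathcal A(x,q)$ and, by a Koml\'os-type argument (see \cite{KS}, \cite{Mostovyi2011}), pass to forward convex combinations $\tilde c^n\in\mathrm{conv}\left(c^n,c^{n+1},\dots\right)$ converging $\left(d\kappa\times\mathbb P\right)$-a.e. to some nonnegative optional process $\hat c$; concavity of $U$ keeps $\left(\tilde c^n\right)$ maximizing, and Fatou's lemma in (\ref{26}) gives $\hat c\in\mathcal A(x,q)$. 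The remaining point is to show $\mathbb E\left[\int_0^T U(t,\omega,\hat c_t)\,d\kappa_t\right]\ge u(x,q)$, and this is where the second half of (\ref{cond1}) is used: fixing $(y,r)\in\mathcal L$ and $Y\in\mathcal Y(y,r)$ with $\mathbb E\left[\int_0^T V(t,\omega,Y_t)\,d\kappa_t\right]<\infty$, the conjugate inequality $U(t,\omega,\tilde c^n_t)\le V(t,\omega,Y_t)+\tilde c^n_t Y_t$ together with the budget bound $\mathbb E\left[\int_0^T \tilde c^n_tY_t\,d\kappa_t\right]\le xy+qr$ furnishes the uniform integrability needed to pass to the limit in the objective functional, exactly as in \cite{KS}, \cite{Mostovyi2011}. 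Uniqueness of the maximizer follows from strict concavity of $U$ (Assumption \ref{assumptionOnU}) and convexity of $\mathcal A(x,q)$.

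The dual minimizer is obtained by the symmetric argument, interchanging $U\leftrightarrow V$, $\mathcal A\leftrightarrow\mathcal Y$, and the two halves of (\ref{cond1}): for $(y,r)\in\{v<\infty\}$ one takes a minimizing sequence $\left(Y^n\right)\subset\mathcal Y(y,r)$, passes to forward convex combinations $\tilde Y^n\to\hat Y$ $\left(d\kappa\times\mathbb P\right)$-a.e. (convexity of $V$ keeps them minimizing, and Fatou in (\ref{27}) gives $\hat Y\in\mathcal Y(y,r)$), and passes to the limit in the objective using that $u>-\infty$ on $\mathcal K$: for $(x,q)\in\mathcal K$ and $c\in\mathcal A(x,q)$ with $\mathbb E\left[\int_0^T U^-(t,\omega,c_t)\,d\kappa_t\right]<\infty$, the conjugate inequality $V(t,\omega,\tilde Y^n_t)\ge U(t,\omega,c_t)-c_t\tilde Y^n_t$ and the bound $\mathbb E\left[\int_0^T c_t\tilde Y^n_t\,d\kappa_t\right]\le xy+qr$ control the negative parts of $V\left(\cdot,\cdot,\tilde Y^n\right)$. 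Uniqueness follows from the strict convexity of $V(t,\omega,\cdot)$ on $(0,\infty)$.

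Finally, for upper semicontinuity of $u$ it suffices to check that each superlevel set $\{u\ge\lambda\}$, $\lambda\in\mathbb R$, is closed. If $(x_n,q_n)\to(x_0,q_0)$ with $u(x_n,q_n)\ge\lambda$, then $(x_0,q_0)\in{\rm cl}\,\mathcal K$ and each $(x_n,q_n)\in\{u>-\infty\}$, so the maximizers $\hat c^n\in\mathcal A(x_n,q_n)$ exist by the previous step; passing to forward convex combinations $\tilde c^n\to\hat c$ $\left(d\kappa\times\mathbb P\right)$-a.e. (the matching convex combinations of $(x_n,q_n)$ converging to $(x_0,q_0)$), Fatou in (\ref{26}) gives $\hat c\in\mathcal A(x_0,q_0)$, while concavity of $U$ gives $\mathbb E\left[\int_0^T U(t,\omega,\tilde c^n_t)\,d\kappa_t\right]\ge\lambda$ for all $n$, and the same passage to the limit as in the second paragraph then yields $\mathbb E\left[\int_0^T U(t,\omega,\hat c_t)\,d\kappa_t\right]\ge\lambda$, i.e. $u(x_0,q_0)\ge\lambda$. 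The lower semicontinuity of $v$ follows symmetrically, showing each sublevel set $\{v\le\lambda\}$ is closed via the dual minimizers. The step I expect to be the main obstacle is precisely this passage to the limit in the two objective functionals — the upper semicontinuity of $c\mapsto\mathbb E\left[\int_0^T U(t,\omega,c_t)\,d\kappa_t\right]$ and the lower semicontinuity of $Y\mapsto\mathbb E\left[\int_0^T V(t,\omega,Y_t)\,d\kappa_t\right]$ along $\left(d\kappa\times\mathbb P\right)$-a.e. convergent sequences — which is exactly where hypothesis (\ref{cond1}) does its work, furnishing the uniform integrability that would otherwise require a condition on the asymptotic elasticity of $U$.
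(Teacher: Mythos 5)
Your overall architecture is the same as the paper's: take an optimizing sequence, pass to forward convex combinations via Lemma A1.1 of \cite{DS}, use Fatou together with the bipolar characterization of Proposition \ref{Proposition1} to place the limit in the correct feasible set, and then pass to the limit in the objective; semicontinuity and existence of optimizers come out of the same computation, and uniqueness from strict concavity/convexity. The one step where your argument does not go through as written is precisely the one you flag as the main obstacle: the limit passage in the objective. You claim that the conjugate inequality together with the budget bound $\mathbb E\bigl[\int_0^T c_t\tilde Y^n_t\,d\kappa_t\bigr]\le xy+qr$ ``furnishes the uniform integrability'' of $V^-(\cdot,\cdot,\tilde Y^n)$ (resp.\ of $U^+(\cdot,\cdot,\tilde c^n)$). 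It does not: the majorant $U^-(\cdot,\cdot,c)+c\,\tilde Y^n$ is only bounded in $L^1(d\kappa\times\mathbb P)$, uniformly in $n$, and an $L^1$-bounded sequence need not be uniformly integrable. If you run Fatou with this fixed test element $c$ you obtain only
\begin{equation*}
\liminf_{n\to\infty}\mathbb E\Bigl[\int_0^T V(t,\omega,\tilde Y^n_t)\,d\kappa_t\Bigr]\;\ge\;\mathbb E\Bigl[\int_0^T V(t,\omega,\hat Y_t)\,d\kappa_t\Bigr]-(xy+qr),
\end{equation*}
i.e.\ the desired inequality up to a positive additive error, and symmetrically on the primal side.

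Two standard repairs exist, and you should make one of them explicit. The paper's route is to observe that $\mathcal Y(y^n,r^n)\subseteq\mathcal Y(\bar y)$ with $\bar y=\sup_n y^n$, invoke Lemma \ref{Lemma2} to pass from (\ref{cond1}) to (\ref{cond2}), and then cite Lemma 3.5 of \cite{Mostovyi2011}, which proves genuine uniform integrability of $\bigl(V^-(\cdot,\cdot,Y^n)\bigr)$ for sequences in $\mathcal Y(\bar y)$ by a different mechanism (constant test consumptions $x$ with $x\downarrow 0$ against the uniform bound on $\mathbb E\bigl[\int_0^T Y^n_t\,d\kappa_t\bigr]$, using integrability of $U^-(\cdot,\cdot,x)$). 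Alternatively, your conjugate-inequality argument can be salvaged without uniform integrability by scaling the test element to the origin: replace $c$ by $\epsilon c$ (legitimate since $u(\epsilon x,\epsilon q)>-\infty$ on the open cone $\mathcal K$, so $\mathbb E\bigl[\int_0^T U^-(t,\omega,\epsilon c_t)\,d\kappa_t\bigr]<\infty$ for every $\epsilon>0$), which shrinks the additive error to $\epsilon(xy+qr)$, and let $\epsilon\downarrow 0$; the primal side requires the analogous scaling of the dual test element using that $v<\infty$ on the cone $\mathcal L$. Either fix is routine, but without one of them the semicontinuity and existence claims are not established. A minor additional remark: your first-paragraph observation that $u$ and $v$ are continuous on the open sets $\mathcal K$ and $\mathcal L$ is correct but does not address the content of the lemma, which concerns behaviour on the boundaries; your superlevel/sublevel-set argument does, once the limit passage above is repaired.
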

\begin{proof}
Let $(y^n,r^n)_{n\geq 1}$ be a sequence in $\mathcal L$ converging to a point $(y,r)\in\{ v<\infty\}$.
Let us fix $Y^n\in\mathcal Y(y^n, r^n)$ such that
\begin{displaymath}
\mathbb E \left[ \int_0^T V\left(t, \omega, Y^n_t\right)d\kappa_t\right] \leq v(y^n, r^n) +\frac{1}{n},\quad n\geq 1.
\end{displaymath}

By Lemma A1.1 in \cite{DS}, there exists a sequence of convex combinations $\tilde Y^n\in {\rm conv}(Y^n,Y^{n+1},\dots)$, $n\geq 1$,
and an element $\hat Y\in\mathbf L^0$, such that $\left( \tilde Y^n\right)_{n\geq 1}$ converges to $\hat Y$ $\left( d\kappa\times \mathbb P\right)$
a.e.

For every $(x,q)\in {\rm cl}\mathcal K$ and $c\in\mathcal A(x,q)$, using Fatou's lemma we get:
\begin{displaymath}
\mathbb E\left[ \int_0^T\hat Y_tc_td\kappa_t\right]\leq
\liminf\limits_{n\to\infty}\mathbb E\left[\int_0^Tc_t\tilde Y^n_td\kappa_t \right]
\leq xy +qr.
\end{displaymath}
Consequently, using Proposition  \ref{Proposition1} we deduce that $\hat Y\in\mathcal Y(y,r)$.

By Lemma \ref{Lemma2} the functions $w$ and $\tilde w$ (defined in (\ref{problemForW}) and (\ref{problemForTildeW}) respectively) satisfy the assumptions of Theorem 3.2 in \cite{Mostovyi2011}.
Let $\bar y \set \sup\limits_{n\geq 1}|y^n|$, then $\left(Y^n\right)_{n\geq 1}\subseteq \mathcal Y(\bar y)$. Therefore, from Lemma 3.5 in \cite{Mostovyi2011}
we deduce that the sequence $\left( V^{-}\left( t, \omega, Y^n_t\right)\right)_{n\geq 1}$
is uniformly integrable. Consequently, using convexity of $V$, we obtain
\begin{equation}\label{lscMainFormula}
\begin{array}{rcl}
v(y,r) &\leq& \mathbb E\left[ \int_0^TV(t, \omega, \hat Y_t)d\kappa_t\right] \\
&\leq&
\liminf\limits_{n\to\infty}\mathbb E\left[ \int_0^TV(t, \omega, Y^n_t)d\kappa_t\right]
=
\liminf\limits_{n\to\infty}v\left(y^n,r^n\right),\\
\end{array}
\end{equation}
which implies lower semi-continuity of $v$. Moreover, by Lemma \ref{subConjugacy} $v>-\infty$ everywhere in its domain. As a result for every $(y,r)\in\{ v<\infty\}$, taking $(y^n,r^n) = (y,r)$, $n\geq 1$, we deduce from (\ref{lscMainFormula}) the existence of a minimizer to the dual problem (\ref{dualProblem}), whose uniqueness follows from the {strict} convexity of $V$. The proof of the corresponding assertions for the function $u$ is similar.
\end{proof}

For each $(y,r)\in\mathcal L$  define the following sets:
\begin{equation}\label{A}
A(y,r) \set \left\{ (x,q)\in\mathcal K:~xy + qr \leq 1\right\},
\end{equation}
\begin{equation}\label{tildeC}
\tilde {\mathcal C}(y, r) \set \bigcup\limits_{(x, q)\in A(y,
r)}\mathcal A(x, q),
\end{equation}
\begin{equation}\label{C}
{\mathcal C}(y, r) \set {\rm cl} \tilde {\mathcal C}(y,r),
\end{equation}
where closure is taken in the topology of convergence in measure
$\left(d\kappa\times \mathbb P\right).$ For the proof of Theorem
\ref{mainTheorem} we need the following lemma.
\begin{Lemma}\label{Lemma11}
Let $(y, r)\in\mathcal L,$ $\tilde {\mathcal C}(y,r)$ and ${\mathcal
C}(y,r)$ be given by (\ref{tildeC}) and (\ref{C}) respectively. Then, under the
conditions of Theorem \ref{mainTheorem}, we have
\begin{equation}\nonumber
\sup\limits_{g\in\tilde {\mathcal C}(y,r)}\mathbb E\left[ \int_0^T
U(t, \omega, xc_t)d\kappa_t\right] = \sup\limits_{g\in {\mathcal
C}(y,r)}\mathbb E\left[ \int_0^T U(t, \omega,
xc_t)d\kappa_t\right],\quad x>0.
\end{equation}
\end{Lemma}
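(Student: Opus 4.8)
The plan is to exploit the fact that $\tilde{\mathcal C}(y,r)$ is dense in $\mathcal C(y,r)$ in the topology of convergence in measure $(d\kappa\times\mathbb P)$, and that the functional $g\mapsto\mathbb E\left[\int_0^T U(t,\omega,xg_t)\,d\kappa_t\right]$ behaves well under this convergence from the relevant side. Since $\tilde{\mathcal C}(y,r)\subseteq\mathcal C(y,r)$, the inequality ``$\leq$'' is immediate, so the whole content is the reverse inequality ``$\geq$''. For that, I would take an arbitrary $g\in\mathcal C(y,r)$, pick a sequence $g^n\in\tilde{\mathcal C}(y,r)$ converging to $g$ $(d\kappa\times\mathbb P)$-a.e.\ (passing to a subsequence using the definition of the closure), and show
\begin{equation}\nonumber
\mathbb E\left[\int_0^T U(t,\omega,xg_t)\,d\kappa_t\right]\leq\limsup_{n\to\infty}\mathbb E\left[\int_0^T U(t,\omega,xg^n_t)\,d\kappa_t\right]\leq\sup_{g'\in\tilde{\mathcal C}(y,r)}\mathbb E\left[\int_0^T U(t,\omega,xg'_t)\,d\kappa_t\right].
\end{equation}
Taking the supremum over $g\in\mathcal C(y,r)$ on the left then gives the claim.

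The key analytic step is the first inequality above, i.e.\ an upper-semicontinuity statement for the primal functional along a.e.-convergent sequences drawn from $\tilde{\mathcal C}(y,r)$. I would establish it by splitting $U=U^{+}-U^{-}$. For the negative part $U^{-}$ one applies Fatou's lemma, which gives $\mathbb E\left[\int_0^T U^{-}(t,\omega,xg_t)\,d\kappa_t\right]\leq\liminf_n\mathbb E\left[\int_0^T U^{-}(t,\omega,xg^n_t)\,d\kappa_t\right]$, so the negative contribution can only help. For the positive part $U^{+}$ the natural route is to invoke uniform integrability of the family $\left(U^{+}(t,\omega,xg^n_t)\right)_{n\geq1}$ with respect to $d\kappa\times\mathbb P$; this is exactly the type of bound supplied by the abstract theory of \cite{Mostovyi2011}, which applies here because by Lemma \ref{Lemma2} the functions $w$ and $\tilde w$ satisfy the hypotheses of its main results. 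Concretely, every $g^n$ lies in $\mathcal A(x^n,q^n)$ for some $(x^n,q^n)\in A(y,r)$, and since $A(y,r)$ is a bounded subset of $\mathcal K$ (it is contained in the intersection of $\mathcal K$ with the closed half-space $\{xy+qr\leq1\}$, and $\mathcal P$ is bounded by the remark following \eqref{setP}), the scaled processes $g^n$ are all dominated, in the order-sense relevant to consumption, by elements of $\mathcal A(x^\ast,0)$ for a fixed $x^\ast$; the uniform integrability of $\left(U^{+}(\cdot,\cdot,xg^n)\right)_n$ then follows from the corresponding lemma in \cite{Mostovyi2011} (the analogue for the primal problem of Lemma 3.5 used in the proof of Lemma \ref{upperSemiContinuity}). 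With uniform integrability of $U^+$ and Fatou for $U^-$ in hand, we may pass to the limit and obtain the desired upper-semicontinuity.

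The main obstacle I anticipate is the uniform-integrability bound for the positive part: one must verify that the boundedness of $A(y,r)$ in $\mathcal K$ really translates into a uniform domination of the consumption processes $g^n$ by a single admissible family to which the results of \cite{Mostovyi2011} apply, keeping careful track of the random endowment term $qF_T$ (controlled via $X'_T\geq\sum_i|F^i_T|$ from Assumption \ref{assumptionOnEndowment} and the constant $M=\max_i|q_i|$, bounded over $A(y,r)$). Once this reduction to the endowment-free setting of \cite{Mostovyi2011} is made cleanly, the rest is routine: Fatou on the negative part, uniform integrability on the positive part, and taking suprema. One should also note the trivial but necessary remark that the supremum over $\mathcal C(y,r)$ is at least that over $\tilde{\mathcal C}(y,r)$, so that both inequalities together yield equality.
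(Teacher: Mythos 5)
Your overall scaffold (the inclusion $\tilde{\mathcal C}(y,r)\subseteq\mathcal C(y,r)$ gives one inequality; for the other, approximate $g\in\mathcal C(y,r)$ by $g^n\in\tilde{\mathcal C}(y,r)$ converging $(d\kappa\times\mathbb P)$-a.e.\ and pass to the limit) matches the paper's, but the key analytic step is set up backwards, and this is a genuine gap. You need $\mathbb E\left[\int_0^T U(t,\omega,xg_t)d\kappa_t\right]\leq\liminf_n\mathbb E\left[\int_0^T U(t,\omega,xg^n_t)d\kappa_t\right]$, i.e.\ that the value at the limit point does not exceed the values along the sequence. In the decomposition $U=U^{+}-U^{-}$, the positive part is harmless for free: ordinary Fatou for nonnegative integrands already gives $\mathbb E\left[\int_0^T U^{+}(t,\omega,xg_t)d\kappa_t\right]\leq\liminf_n\mathbb E\left[\int_0^T U^{+}(t,\omega,xg^n_t)d\kappa_t\right]$, so the uniform integrability of $\left(U^{+}(\cdot,\cdot,xg^n)\right)_{n\geq1}$ you invoke buys nothing. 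The dangerous part is $U^{-}$: the failure mode is that $\mathbb E\left[\int U^{-}(xg^n)d\kappa\right]$ stays large while $\mathbb E\left[\int U^{-}(xg)d\kappa\right]$ is small, making the limit's value strictly larger than the limsup along the sequence. Fatou applied to $U^{-}$ gives exactly this unhelpful direction, so your gloss ``the negative contribution can only help'' is the opposite of the truth; what you would need is uniform integrability of $\left(U^{-}(\cdot,\cdot,xg^n)\right)_{n\geq1}$, which is not supplied by \cite{Mostovyi2011} (whose Lemma 3.5-type results give UI of $U^{+}$ over primal domains and of $V^{-}$ over dual domains --- the ingredients for existence of optimizers, i.e.\ the \emph{reverse} inequality) and is false in general: with $U(t,\omega,x)=\log x$, take $g^n$ equal to $1$ except on a set of $(d\kappa\times\mathbb P)$-measure $1/n$ where it equals $e^{-n^2}$; the $g^n$ stay in a fixed $\mathcal A(x,q)$ and converge a.e.\ to $1$, yet $\mathbb E\left[\int U(g^n)d\kappa\right]\to-\infty$ while the limit has value $0$. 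Indeed, your two ingredients (UI of $U^{+}$ plus Fatou on $U^{-}$) combine to prove the reverse inequality $\mathbb E\left[\int U(xg)d\kappa\right]\geq\limsup_n\mathbb E\left[\int U(xg^n)d\kappa\right]$, not the one in your displayed chain.

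The paper circumvents this with a shift-and-dilate trick instead of a $U^{\pm}$ splitting. By Lemma \ref{subConjugacy}, for each $\delta>0$ there is $c\in\tilde{\mathcal C}(y,r)$ with $\mathbb E\left[\int_0^T U(t,\omega,\delta c_t)d\kappa_t\right]>-\infty$. Since $U$ is increasing, $U(t,\omega,xg^n_t+\delta c_t)\geq U(t,\omega,\delta c_t)$, so all integrands are bounded below by one function with integrable negative part and Fatou applies in the correct direction to $U$ itself, giving $\mathbb E\left[\int U(xg+\delta c)d\kappa\right]\leq\liminf_n\mathbb E\left[\int U(xg^n+\delta c)d\kappa\right]$. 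Convexity of $\tilde{\mathcal C}(y,r)$ shows $\frac{x}{x+\delta}g^n+\frac{\delta}{x+\delta}c\in\tilde{\mathcal C}(y,r)$, so the right-hand side is at most $\phi(x+\delta)$ where $\phi$ denotes the supremum over $\tilde{\mathcal C}(y,r)$; monotonicity bounds $\mathbb E\left[\int U(xg)d\kappa\right]$ by the left-hand side; and one lets $\delta\downarrow0$ using continuity of the finite concave function $\phi$. If you keep your structure you must replace the $U^{\pm}$ decomposition by this (or an equivalent) device; the boundedness of $A(y,r)$ and the control of $qF_T$ via $X'$, which you flag as the main obstacle, are not where the difficulty lies.
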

\begin{proof}
For each $x>0$ let us define
\begin{equation}\nonumber
\phi(x) \set \sup\limits_{g\in\tilde {\mathcal C}(y,r)}\mathbb
E\left[ \int_0^T U(t, \omega, xc_t)d\kappa_t\right],~~ \psi(x) \set
\sup\limits_{g\in {\mathcal C}(y,r)}\mathbb E\left[ \int_0^T U(t,
\omega, xc_t)d\kappa_t\right].
\end{equation}
Due to concavity of $U$, both $\phi$ and $\psi$ are concave,  and $\phi\leq\psi$. If $\phi(x) = \infty$ for some
$x >0$ then, due to concavity, $\phi$ is infinite for all $x>0$. In this case the assertion of the theorem is trivial.
Also, it follows from Lemma \ref{subConjugacy} that $\phi(x) > -\infty$ for
every $x>0$.
Therefore, without loss of generality for the remainder of this proof we will assume that
$\phi$ is finite.

Fix $x>0$ and $g\in \mathcal C(y, r)$. Let $\left( g^n\right)_{n\geq
1}$ be a sequence in $\tilde {\mathcal C}(y,r)$ that converges  $\left( d\kappa\times\mathbb P\right) $ almost everywhere to
$g$. It
follows from Lemma \ref{subConjugacy} that for every $\delta >0$ there
exists $c\in\tilde {\mathcal C}(y,r)$ such that
\begin{equation}\nonumber
 \mathbb E\left[ \int_0^T U(t, \omega, \delta c_t)d\kappa_t\right] > -\infty.
\end{equation}
 Therefore we have:
 \begin{equation}\nonumber
 \begin{array}{rcl}
 \mathbb E\left[ \int_0^TU(t, \omega,xg_t)d\kappa_t\right] &\leq &
\mathbb E\left[ \int_0^TU(t, \omega,xg_t + \delta c_t)d\kappa_t\right]\\
&\leq&\liminf\limits_{n\to\infty}\mathbb E\left[ \int_0^TU(t, \omega,xg^n_t + \delta c_t)d\kappa_t\right]
\\
&\leq&\phi(x + \delta),\\
\end{array}
 \end{equation}
where the first inequality is valid because $U$ is increasing,
 the second one
follows from Fatou's lemma, and the third one comes from the fact
that $\tilde {\mathcal C}(y,r)$ is convex.
 Since $\phi$ is concave, it is continuous.
As a result
\begin{equation}\nonumber
\psi(x) = \sup\limits_{g\in {\mathcal C}(y,r)} \mathbb
E\left[\int_0^T U(t,\omega,xg_t)d\kappa_t\right] \leq
\lim\limits_{\delta\to 0}\phi(x + \delta) = \phi(x).
\end{equation}
\end{proof}

\begin{proof}[Proof of the Theorem \ref{mainTheorem}] \textit{(i)} Concavity of the
function $u$ follows from strict concavity  of $U$.
Fix $(y, r)\in\mathcal L$. Applying Lemma \ref{Lemma11} and the definition of the set $\tilde
{\mathcal C}(y,r),$  we get for each $z>0$:
\begin{equation}\label{barUisFinite}
\begin{array}{c}
\bar u(z) \set \sup\limits_{c\in\mathcal C(y,r)}\mathbb E\left[
\int_0^T U(t, \omega, zc_t)d\kappa_t\right] = \\ = \sup\limits_{c\in
\tilde{\mathcal C}(y,r)}\mathbb E\left[ \int_0^T U(t, \omega,
zc_t)d\kappa_t\right] =\sup\limits_{(x, q)\in zA(y,r)}u(x,q)
>-\infty.
\end{array}
\end{equation}
It follows from Proposition \ref{Proposition1} that
\begin{equation}\nonumber
Y\in\mathcal Y(y,r) ~~~\Leftrightarrow ~~~\mathbb E\left[ \int_0^T
c_tY_td\kappa_t\right]\leq 1~~~{\rm for~all~}c\in\mathcal C(y,r).
\end{equation}
We obtain that the sets ${\mathcal C}(y,r)$ and $\mathcal Y(y,r)$
satisfy the assumption of Theorem 3.2 in Mostovyi
\cite{Mostovyi2011}.
Consequently, since $v(y,r) < \infty$ for all $(y,r)\in\mathcal L,$ using
(\ref{barUisFinite}) we get:
\begin{equation}\nonumber
v(y,r) = \sup\limits_{z>0}\left( \bar u(z) - z\right) =
\sup\limits_{(x,q)\in\mathcal K}\left(u(x,q) - xy - qr\right).
\end{equation}
It follows from Lemma \ref{upperSemiContinuity} that  $-u$ and $v$
are proper closed convex functions. Therefore the latter equality implies
the biconjugacy relations (\ref{conjugacy}) (see Rockafellar \cite{Rok},
Section 12).

\textit{(ii)} The assertions of item $(ii)$ follow from Lemma \ref{upperSemiContinuity}.

\textit{(iii)} Conjugacy relations  (\ref{conjugacy}) imply (by Theorem 23.4 and
Corollary 23.5.1 in Rockafellar \cite{Rok}) that for every $(x, q)\in\mathcal
K$ we have $\partial
u(x,q) \subseteq cl \mathcal L$.

Let $(x, q)\in\mathcal K$ and $(y,r)\in {\rm cl}\mathcal L$ be such that 
(\ref{17}), (\ref{18}), and (\ref{7-082}) hold, where $\hat c(x,q)$ and $\hat Y(y,r)$ the optimizers to (\ref{primalProblem}) and 
(\ref{dualProblem}) respectively. The existence of such optimizers follows from Lemma \ref{upperSemiContinuity}.
Then
using conjugacy of $U$ and $V$ we obtain:
\begin{displaymath}
\begin{array}{rcl}
 0&=& \mathbb E\left[ \int_0^T \left(V(t,\omega,\hat Y_t(y,r)) -
U(t, \omega, \hat c_t(x,q)) + \hat c_t(x,q)\hat
Y_t(y,r)\right)d\kappa_t\right] \\
&=& v(y,r) - u(x,q) + xy + qr.\\
\end{array}
\end{displaymath}
By Theorem 23.5 in
\cite{Rok}, the biconjugacy relations (\ref{conjugacy}) imply 
that $(y,r)\in\partial u(x,q)$.

Conversely, fix $(x,q)\in\mathcal K$,
and let $(y,r)\in\partial u(x,q)$.
By Lemma \ref{upperSemiContinuity} $-u$ and $v$
are closed convex functions. We have also proved in item \textit{(i)} that they satisfy biconjugacy
relations (\ref{conjugacy}). Consequently,
\begin{equation}\label{7-08}
-u(x,q) + v(y,r) + xy + qr \leq 0.
\end{equation}
Now using Lemma \ref{subConjugacy}, we deduce (\ref{7-082}). In turn, by Lemma \ref{upperSemiContinuity} this implies
that there exists $\hat Y(y,r)$, a unique minimizer to the problem
(\ref{dualProblem}). As $u(x,y)>-\infty$, by Lemma
\ref{upperSemiContinuity} we deduce that there exists $\hat c(x,q)$,
a unique maximizer to the problem (\ref{primalProblem}). Using
Proposition \ref{Proposition1} we obtain from (\ref{7-08}):
\begin{equation}\nonumber
\begin{array}{c}
\mathbb E\left[ \left| \int_0^T\left( V\left(t, \omega,\hat Y_t(y,r)
\right) + \hat c_t(x,q)\hat Y_t(y,r) - U\left(t, \omega,\hat
c_t(x,q)\right)\right)d\kappa_t\right|\right] \\
= \mathbb E\left[ \int_0^T\left( V\left(t, \omega,\hat Y_t(y,r)
\right) + \hat c_t(x,q)\hat Y_t(y,r) - U\left(t, \omega,\hat
c_t(x,q)\right)\right)d\kappa_t\right] \\
\leq v(y,r) + xy + qr - u(x,q)\leq 0,\\
\end{array}
\end{equation}
which gives (\ref{17}) and (\ref{18}).

\end{proof}

\begin{proof}[Proof of Lemma \ref{Lemma2}]
 Assume that (\ref{cond2}) holds. Fix
$(x,q)\in\mathcal K.$ It follows from Assumption \ref{assumptionOnEndowment}
and Lemma 1 in \cite{HugonnierKramkov2004} that
$(x,0)\in K$ for each $x>0.$ Since $\mathcal K$ is an open convex cone, there
exists a point $(x_1,q_1)\in\mathcal K,$ such that
$$(x,q) = \lambda (x_1,q_1) + (1-\lambda)(x_2,0)$$ for some $\lambda
\in(0,1)$ and $x_2>0$. 
Take $c\in\mathcal A(x_2,0),$ such that

\begin{equation}\label{7-09-1}
\mathbb E\left[\int_0^TU(t,\omega,(1-\lambda)c_t)d\kappa_t \right] >
-\infty.
\end{equation}
 Note that such a process $c$ exists by
assumption (\ref{cond2}). Fix $g\in\mathcal A(x_1,q_1)$. Then we have $$\lambda g +
(1-\lambda)c\in\mathcal A(x,q).$$ Since $U$ is
increasing, we obtain from
(\ref{7-09-1}):
\begin{equation}\nonumber
\begin{array}{c}
u(x,q) \geq \mathbb E\left[\int_0^TU(t,\omega,\lambda g_t +
(1-\lambda)c_t)d\kappa_t \right] \geq\\
\geq \mathbb E\left[\int_0^TU(t,\omega,(1-\lambda)c_t)d\kappa_t
\right] >
-\infty.\\
\end{array}
\end{equation}

In order to prove that $v$ is finite on $\mathcal L$, define
the set
\begin{equation}\nonumber
\mathcal E \set \left\{ (y,r)\in{\rm cl}\mathcal L:~~
v(y,r)<\infty\right\}.
\end{equation}
First, we show that $\mathcal E$ is nonempty and establish some properties of
$\mathcal E$.
Let
\begin{displaymath}
\begin{array}{rcl}
\mathcal B &\set& \left\{ (y,r) \in\mathcal L:~~y\leq 1\right\},\\
\tilde{\mathcal  D} &\set& \bigcup\limits_{(y,r) \in \mathcal B}\mathcal Y(y,r),\\
\mathcal D &\set& {\rm cl}\tilde{\mathcal D},\\\end{array}
\end{displaymath}
where the closure is taken in measure $\left(d\kappa\times \mathbb P \right)$.
It follows from Proposition  \ref{Proposition1} and Fatou's lemma that

\begin{equation}\nonumber
c\in\mathcal A(1, 0) \quad\Leftrightarrow \quad\mathbb E\left[ \int_0^T
c_tY_td\kappa_t\right]\leq 1\quad{\rm for~every~}Y\in\mathcal D.
\end{equation}
Therefore the sets $\mathcal A(1,0)$ and $\mathcal D$ satisfy
the assumptions of Theorem 3.2 in \cite{Mostovyi2011}, which in particular
asserts that for every $x>0$ there exsits $\hat c(x)$, a unique maximizer  to
(\ref{problemForW}). Thus, for every $x>0$, we define
\begin{equation}\nonumber
 Y_t(x) \set U'\left(t,\omega, \hat
c_t(x)\right),\quad t\in[0,T].
\end{equation}
It follows from the same theorem that $w$ is a continuously differentiable
function that satisfies the Inada conditions and
\begin{displaymath}
Y(x)\in w'(x)\mathcal D.
\end{displaymath}
Using Proposition  \ref{Proposition1} and Fatou's lemma we can show that there
exists $(y,r)\in w'(x){\rm cl}\mathcal B$ .
Therefore, $\mathcal E\neq
\emptyset$. Moreover, since $w$ satisfies the Inada conditions, we deduce that
the closure of $\mathcal E$ contains origin. One can also see that the set $\mathcal
E$ is convex and
\begin{displaymath}
\mathcal E \supseteq \bigcup\limits_{\lambda \geq 1}\lambda \mathcal E.
\end{displaymath}

Second, we prove that $ \mathcal L \subseteq \mathcal E$. Fix an arbitrary
$(y,r)\in\mathcal L$ and let $\delta >0$ be such that $B_{\delta}(y,r)\subset
\mathcal L$, where $B_{\delta}(y, r)$ denotes the ball in $\mathbb R^{N+1}$ of radius $\delta$
centered at $(y,r)$. Since origin is in the closure of $\mathcal E$, there
exists $(\tilde y_2, \tilde r_2)\in\mathcal E\cap B_{\delta/2}(0)$. Let
\begin{displaymath}
(\tilde y_1, \tilde r_1)\set (y-\tilde y_2, r - \tilde r_2).
\end{displaymath}
Then $(\tilde y_1, \tilde r_1)\in B_{\delta/2}(y,r)$. Therefore, there exists
$\lambda \in (0,1)$ such that
\begin{displaymath}
(y_1,r_1)\set\frac{1}{\lambda}(\tilde y_1, \tilde r_1)\in B_{\delta}(y,r).
\end{displaymath}
Set $(y_2, r_2)\set \frac{1}{1-\lambda}(\tilde y_2, \tilde r_2)$, then
\begin{displaymath}
(y,r) = \lambda (y_1,r_1) + (1-\lambda)(y_2,r_2).
\end{displaymath}
Fix a process $Y'\in\mathcal Y(y_1,r_1)$. By
construction of the set $\mathcal E$ there exists a process
$Y''\in\mathcal Y(y_2,r_2),$ such that
\begin{equation}\nonumber
\mathbb E\left[
\int_0^TV(t,\omega,(1-\lambda)Y''_t)d\kappa_t\right]<\infty.
\end{equation}
Since $V$ is decreasing and $\left(\lambda Y' + (1-\lambda) Y''\right)\in\mathcal Y(y,r)$, we deduce
\begin{equation}\nonumber
\begin{array}{c}
v(y,r) \leq \mathbb E\left[ \int_0^TV(t,\omega,\lambda
Y'_t+(1-\lambda)Y''_t)d\kappa_t\right] \\
\\ \leq \mathbb E\left[
\int_0^TV(t,\omega,(1-\lambda)Y''_t)d\kappa_t\right]<\infty. \\
\end{array}
\end{equation}

Conversely, if (\ref{cond1}) holds then for every $p\in\mathcal P$, since
$\mathcal Y(y,yp)$ is a subset of $\mathcal Y(y)$, we have
\begin{displaymath}
\tilde w(y) \leq v(y, yp) < \infty,\quad y>0.
\end{displaymath}
The other assertion of (\ref{cond2}) follows trivially.
 \end{proof}

\section*{Acknowledgements}
I would like to thank Giovanni Leoni, Dmitry Kramkov, and Pietro Siorpaes for
the discussions on the topics of the paper.

\bibliographystyle{plainnat} \bibliography{finance}

\end{document}